\pgfplotsset{compat=newest}
\newtheorem{theorem}{Theorem}
\newtheorem{proposition}{Proposition}
\newtheorem{remark}{Remark}
\newtheorem{lemma}{Lemma}
\newcommand{\R}{\ensuremath{{\mathbb R}}}
\newcommand{\Z}{\ensuremath{{\mathbb Z}}}
\newcommand{\N}{\ensuremath{{\mathbb N}}}
\newcommand{\Q}{\ensuremath{{\mathbb Q}}}
\newcommand{\Enc}{\mathsf{Enc}}
\newcommand{\modp}{~\mathrm{mod}~}
\newcommand{\nult}{\mathrm{nullity}}
\newcommand{\Span}{\mathrm{span}}
\newcommand{\enc}{\ensuremath{\mathsf{Enc}}}
\newcommand{\dec}{\ensuremath{\mathsf{Dec}}}
\newcommand{\pack}{\ensuremath{\mathsf{Pack}}}
\newcommand{\upack}{\ensuremath{\mathsf{Unpack}}}
\newcommand{\Tr}{\ensuremath{\mathsf{Tr}}}
\newcommand{\Slot}{\ensuremath{\mathsf{Slot}}}
\newcommand{\scas}{\ensuremath{\mathsf{s}}} % scale factor s
\newcommand{\scaL}{\ensuremath{\mathsf{L}}} % scale factor L
\newcommand{\ini}{\ensuremath{\mathsf{ini}}}
\newcommand{\mult}{\ensuremath{\mathsf{mult}}}
\newcommand{\diag}{\ensuremath{\mathrm{diag}}}
\newcommand{\ak}{\ensuremath{\mathrm{ak}}}      % automorphism key
\newcommand{\pmes}{\ensuremath{\mathrm{m}}}    % messeage m
\newcommand{\pa}{\ensuremath{\mathrm{a}}}
\newcommand{\pb}{\ensuremath{\mathrm{b}}}
\newcommand{\py}{\ensuremath{\mathrm{y}}}
\newcommand{\pz}{\ensuremath{\mathrm{z}}}
\newcommand{\pu}{\ensuremath{\mathrm{u}}}
\newcommand{\pH}{\ensuremath{{\mathrm{H}}}}
\newcommand{\pg}{\ensuremath{{\mathrm{G}}}}
\newcommand{\pr}{\ensuremath{{\mathrm{r}}}}
\newcommand{\ciph}{\ensuremath{\mathbf{c}}}
\newcommand{\ciphy}{\ensuremath{\mathbf{y}}}
\newcommand{\ciphz}{\ensuremath{\mathbf{z}}}
\newcommand{\ciphu}{\ensuremath{\mathbf{u}}}
\newcommand{\ciphf}{\ensuremath{\mathbf{F}}}
\newcommand{\ciphh}{\ensuremath{\mathbf{H}}}
\newcommand{\ciphg}{\ensuremath{\mathbf{G}}}
\newcommand{\tF}{\ensuremath{\bar{F}}}
\newcommand{\tG}{\ensuremath{\bar{G}}}
\newcommand{\tH}{\ensuremath{\bar{H}}}
\title{ \LARGE \bf
Taking Advantage of Rational Canonical Form for Faster Ring-LWE based Encrypted Controller with Recursive Multiplication
}
\author{Donghyeon Song$^1$, Yeongjun Jang$^1$, Joowon Lee$^1$, and Junsoo Kim$^2$
\thanks{This work was supported by the National Research Foundation of Korea (NRF) grant funded by the Korea government (MSIT) (No. RS-2024-00353032).}%
\thanks{$^{1}$ASRI, Department of Electrical and Computer Engineering,
Seoul National University, Seoul 08826, Korea.}
\thanks{$^{2}$Department of Electrical and Information Engineering,
Seoul National University of Science and Technology, Seoul 01811, Korea.}
\thanks{Corresponding author: Donghyeon Song {\small(dhsong@cdsl.kr)}}
}
\begin{document}

\maketitle
\thispagestyle{empty}
\pagestyle{empty}
%%%%%%%%%%%%%%%%%%%%%%%%%%%%%%%%%%%%%%%%%%%%%%%%%%%%%%%%%%%%%%%%%%%%%%%%%%%%%%%%

\begin{abstract}
    This paper aims to provide an efficient implementation of encrypted linear dynamic controllers that perform recursive multiplications on a Ring-Learning With Errors (Ring-LWE) based cryptosystem.
    By adopting a system-theoretical approach, we significantly reduce both time and space complexities, particularly the number of homomorphic operations required for recursive multiplications.
    Rather than encrypting the entire state matrix of a given controller, the state matrix is transformed into its rational canonical form, whose sparse and circulant structure enables that encryption and computation are required only on its nontrivial columns.
    Furthermore, we propose a novel method to ``pack'' each of the input and the output matrices into a single polynomial, thereby reducing the number of homomorphic operations.
    Simulation results demonstrate that the proposed design enables a remarkably fast implementation of encrypted controllers.
\end{abstract}
%%%%%%%%%%%%%%%%%%%%%%%%%%%%%%%%%%%%%%%%%%%%
\section{Introduction}

The notion of \emph{encrypted control} \cite{kogiso,csm,ARC, ARC23}, which has been devised to enhance the security of networked control systems,
suggests controllers that operate over encrypted data.
This means that
once control signals and parameters are encrypted, they undergo computations without being decrypted throughout the networked side, even at the controller.
Since homomorphic encryption (HE) enables the evaluation of arithmetic operations over encrypted data, it plays a key role in implementing these \emph{encrypted controllers}.

However, adapting HE to control systems has faced some challenges, especially in dealing with dynamic control operations that require recursive multiplications on encrypted data.
This is because the number of recursive multiplications is limited in most HE schemes.
The bootstrapping technique~\cite{Gentry2009Fully} can be used to extend this limit, but it is yet considered computationally expensive for real-time control systems. 
It has been shown that a linear dynamic controller needs to have an integer-valued state matrix \cite{Cheon2018Needfor}, to which the state is recursively multiplied, in order to be encrypted without bootstrapping.
Subsequently, in \cite{KimTAC2023}, an encrypted dynamic controller that
performs an unlimited number of recursive multiplications over encrypted data
has been proposed over a Learning With Errors (LWE) based cryptosystem \cite{lwe}; however, this approach still remains computationally demanding.

Recently, in \cite{Jang2024RLWE}, an encrypted controller which is also able to perform recursive multiplications for an infinite time horizon has been proposed over a Ring-Learning With Errors (Ring-LWE) based cryptosystem \cite{lyubashevsky2013ideal}, which is a faster algebraic variant of LWE based scheme. As Ring-LWE is based on polynomials, the implementation of \cite{Jang2024RLWE} utilizes a \emph{packing} method, which refers to encoding a vector of messages into a single polynomial so that the messages can be computed at once. In particular, while \cite{KimTAC2023} encrypts each component of the state matrix, \cite{Jang2024RLWE} packs and encrypts each column of the state matrix.

In this paper, we propose
an encrypted controller design over Ring-LWE based cryptosystem with recursive multiplications that achieves improved efficiency over the prior work \cite{Jang2024RLWE}, in terms of time and space complexities.
Specifically, both the computation load and memory usage of the proposed encrypted controller are $O(\log n)$ in the best case, where $n\in\N$ is the order of the controller, whereas those of \cite{Jang2024RLWE} are $O(n)$ (see Table~\ref{tab:comparison} and Remark~\ref{rem:best} for details).

To this end, we take a system-theoretical approach,
transforming
the state matrix of a given linear dynamic controller
into the \emph{rational canonical form} (also known as the Frobenius normal form).
We exploit the rational canonical form by decomposing it into a sparse matrix and a circulant matrix,
where the former consists of a few nontrivial columns.
We encrypt only these nontrivial columns and handle the circulant part utilizing the polynomial ring structure.
This significantly reduces the number of operations over encrypted data required for recursive multiplications.
In addition, such transformation is always possible,
and it is demonstrated that an integer state matrix remains to be an integer matrix after this transformation.

Furthermore, regarding the input and the output matrices, we propose customized methods to pack each of them into a single polynomial, so that fewer operations are involved for the matrix-vector multiplications compared to the prior work \cite{Jang2024RLWE}. Overall, the proposed design enables a remarkably fast implementation of encrypted controllers, as can be seen from the simulation results summarized in Table~\ref{tab:simul_time}.

The rest of this paper is organized as follows.
Section~\ref{sec:Preliminary} formulates the problem of interest and introduces the polynomial ring.
Section~\ref{sec:main}
proposes
a reformulation of a linear dynamic controller over the polynomial ring, exploiting the rational canonical form.
In Section~\ref{sec:enc}, the reformulated controller is implemented over a Ring-LWE based cryptosystem in an efficient way.
The efficiency of the proposed controller is validated through numerical simulations in Section \ref{sec:simul}.

\textit{Notation:}
The sets of integers, positive integers, rational numbers, and real numbers are denoted by $\Z$, $\N$, $\Q$, and $\R$, respectively. 
For $q\in\N$, we define $\Z_q := \Z \cap [-q/2, q/2)$.
For $a\in\Z$ and $q\in\N$, we define the modulo operation by $a \modp q := a - \lfloor (a+q/2)/q\rfloor q$.
The modulo operation is defined element-wisely for integer vectors and coefficient-wisely for polynomials.
For a polynomial $\pa(X) = \sum_{i=0}^{N-1}\pa_iX^{i}$,
let $\|\pa(X) \|:= \max_{0\le i < N}\{  | \pa_i| \}$. For a vector, $||\cdot||$ denotes the infinity norm. 
For polynomials $\pa$ and $\pb$, let $\pa\vert\pb$ refer that $\pb$ is a multiple of $\pa$.
Let $\mathbf{0}$ and $I$ denote the zero matrix and the identity matrix of appropriate sizes, respectively.
Additionally, $I_n$ denotes the $n \times n$ identity matrix for $n \in \N$.
Finally, the block diagonal matrix whose main-diagonal blocks are $C_0,\,C_1,\,\ldots,\,C_{n-1}$ is written by $\diag(C_0, C_1, \ldots, C_{n-1})$.

\section{Problem Formulation and Preliminaries} \label{sec:Preliminary}

\subsection{Problem formulation}

Consider a discrete-time linear dynamic controller
\begin{equation}\begin{split} \label{eq:ctrl}
    x(t+1) &= Fx(t) + Gy(t),  \\
    u(t) &= Hx(t), ~~ x(0) = x^\ini,
\end{split}    
\end{equation}
where $x(t)\in\R^n$ is the state with initial value $x^\ini\in\R^n$, $y(t)\in\R^p$ is the input, and $u(t)\in\R^m$ is the output.
The objective is to implement \eqref{eq:ctrl} over a Ring-LWE based cryptosystem \cite{lyubashevsky2013ideal}, in a way that i) the resulting encrypted controller is able to perform infinitely many recursive multiplications, and
ii) it achieves increased computational and memory efficiency over the prior work \cite{Jang2024RLWE}.

It is well known that the state matrix of a linear dynamic system needs to be an integer matrix in order to be realized over encrypted data \cite{Cheon2018Needfor}.
Thus, we assume that $F \in \Z^{n\times n}$.
Otherwise,
one can convert the controller to have an integer state matrix through existing methods \cite{KimjShim21,Tava22,Lee2025Integer}, or directly design a controller having an integer state matrix as in \cite{Lee2025Integer}.

For ease of analysis,
we assume that $y(t)\in\Q^p$ for all $t\ge0$, $G \in \Q^{n \times p}$, $H \in \Q^{m \times n}$, and $x^\ini \in \Q^{n}$, as $\Q$ is dense in $\R$.
As a result, $x(t) \in \Q^{n}$ and $u(t) \in \Q^{m}$ for all $t\ge0$.
In addition, let $n$ be a power-of-two.
If this is not the case, one can
increase the order of \eqref{eq:ctrl} to be the least power-of-two greater than $n$,
while the state matrix remains to be an integer matrix and the input-output relation of \eqref{eq:ctrl} is preserved.
We refer to Remark~\ref{ref:n-powerof2} for a specific method.

\subsection{Preliminaries on polynomial ring and packing}\label{subsec:RingPack}

The Ring-LWE based cryptosystem \cite{lyubashevsky2013ideal} is based on a polynomial ring, that is, 
it encrypts polynomial messages.
Thus, it is essential to encode the signals and matrices of \eqref{eq:ctrl} into polynomials, which is referred to as \emph{packing}.
In this regard, we introduce the structure of a polynomial ring
and review the packing method utilized in the prior work \cite{Jang2024RLWE}. 

Let $q \in \N$ be a prime and $N \in \N$ be a power-of-two.
{Every element of the polynomial ring $R_q := \Z_q[X]/\langle X^N +1 \rangle$ is}
represented by a polynomial with coefficients in $\Z_q$ and degree less than $N$.
{The ring $R_q$ is closed under the addition and the multiplication defined as follows; for $\pa=\sum_{i=0}^{N-1}\pa_iX^i\in R_q$ and $\pb=\sum_{i=0}^{N-1}\pb_iX^i\in R_q$,} let
\begin{equation}  \label{eq:arithPol}
    \pa + \pb \!:= \!\textstyle\sum_{i=0}^{N-1} (\pa_i + \pb_i) X^i \modp q,\,\,\,\,
    \pa \cdot \pb \!:=\! \textstyle\sum_{i=0}^{N-1} \mathrm{c}_i X^i
\end{equation}
where $\mathrm{c}_i := \sum_{j=0}^i \pa_{i-j}\pb_j - \sum_{j = i +1 }^N \pa_{N+i-j}\pb_j \modp q$.
{The multiplication over $R_q$ is in fact a typical polynomial multiplication, {mapping the coefficients into $\Z_q$} and regarding $X^N$ as $X^{N}=-1$.
This can also be interpreted as the product between a \emph{skew-circulant matrix} and a vector, that is,}
\begin{equation} \label{eq:skew-circ}
     \!
    \begin{bmatrix}
        \mathrm{c}_0 \\ \mathrm{c}_1 \\ \vdots \\ \mathrm{c}_{N-1}
    \end{bmatrix} \!\!=\!\! \begin{bmatrix}
        \pa_0 &\!\!\! -\pa_{N-1} &\!\! \cdots &\!\!\! -\pa_1 \\
        \pa_1 &\!\!\! \pa_0 &\!\! \cdots &\!\!\! -\pa_2 \\
        \vdots &\!\!\! \vdots &\!\! \ddots &\!\!\! \vdots \\
        \pa_{N-1} &\!\!\! \pa_{N-2} &\!\! \cdots &\!\!\! \pa_0
    \end{bmatrix} \!\!\begin{bmatrix}
        \pb_0 \\ \pb_1 \\ \vdots \\ \pb_{N-1}
    \end{bmatrix} \!\modp q.
\end{equation}

Now we review the packing method used in \cite{Jang2024RLWE}.
Since $N$ is typically chosen as a large number to ensure security \cite{HEstandard},
let $N/n\in \N$, which is also a power-of-two.
The \emph{coefficient packing} operation $\pack:\Z_q^n \to R_q$ is defined by
\begin{equation}
    \pack(a) := a_0 + a_1X^{\frac{N}{n}} + \cdots + a_{n-1}X^{(n-1)\frac{N}{n}} \in R_q,
\end{equation}
for a vector $a=[a_0,\,\ldots,\,a_{n-1}]^\top\in\Z_q^n$.
This packing embeds elements of a vector to a polynomial as
$N/n$-equidistant coefficients, which are often referred to as \emph{the packing slots}.
For notational simplicity, we often
write $\pack(a)$ instead of 
$\pack(a\modp q)$ for $a \in \Z^n$.

Additionally, given a power-of-two $\alpha \in \N$,
we define a mapping $\Slot_\alpha:R_q \to R_q$ for $\pa=\sum_{i=0}^{N-1}\pa_iX^i\in R_q$ by
\begin{equation}\label{eq:slotunpack}
\Slot_\alpha(\pa) :=\textstyle \sum_{i=0}^{\alpha-1}\pa_{i\frac{N}{\alpha}}X^{i\frac{N}{\alpha}},
\end{equation}
where all other coefficients are eliminated except the $N/\alpha$-equidistant ones.
For example, $\Slot_n(\cdot)$ returns a polynomial with only the packing slots, and $\Slot_1(\cdot)$ extracts the constant term of a polynomial.
Thus, it is easily derived that 
\begin{equation}
    \Slot_1(X^{-i}\cdot \pa ) = \Slot_1(-X^{N-i} \cdot \pa)  = \pa_i,
\end{equation} for $i = 0, 1, \ldots, N-1$, under the relation $X^N = -1$.

\subsection{Prior work}

We briefly review how \cite{Jang2024RLWE} utilized the packing method to perform the matrix-vector multiplications of \eqref{eq:ctrl} over $R_q$, focusing on the recursive multiplication $Fx(t)$.
For a vector $z=[z_0,z_1,\cdots,z_{n-1}]^\top\in\Z_q^{n}$ and $F \in \Z^{n\times n}$, consider the following computation:
\begin{equation}
z^+:=Fz\modp q = \textstyle\sum_{i=0}^{n-1} F_i z_i \modp q\in\Z_q^n,
\end{equation}
where $F_i$ is the $(i+1)$-th column of $F$ for $i=0,\ldots, n-1$.
One can compute $z^+$ over $R_q$ as
\begin{align}\label{eq:packMult}
    \pack(z^+)= \textstyle\sum_{i=0}^{n-1} \pack(F_i )\cdot z_i,  
\end{align}
whose right-hand-side involves $n$-multiplications of polynomials.
This technique has also been employed to compute $Gy(t)$ and $Hx(t)$, packing each and every column of the matrices into a single polynomial.

The polynomial $\pack(z^+)$ in \eqref{eq:packMult} can again be utilized for recursive multiplications, that is, to compute $\pack(Fz^+)$ as
\begin{equation}\label{eq:packRecurs} \begin{split}
    \pack(Fz^+ ) &= \textstyle\sum_{i=0}^{n-1} \pack(F_i )\cdot z^+_i,
\end{split}
\end{equation}
where $z^+_i \!:= \!\Slot_1(X^{-i\frac{N}{n}}\!\cdot\!\pack(z^+))$ for $i=0,\ldots,n-1$,
i.e., the $(i+1)$-th packing slot of $\pack(z^+)$.
The operation $\Slot_1(\cdot)$ plays the role of \emph{unpacking} $\pack(z^+)$.
Therefore, the computation of \eqref{eq:packRecurs} not only involves $n$-multiplications, but also requires the \emph{slot} operation.

\section{Rational Canonical Form and Controller Reformulation}\label{sec:main}
We propose a method to reformulate the controller \eqref{eq:ctrl} to operate over $R_q$ with fewer operations by taking advantage of the rational canonical form, which will be further elaborated in Section~\ref{subsec:RCF}.
Moreover, we propose novel packing methods that are specifically designed to encode each of $G$, $H$, and $y(t)$ into a single polynomial,
which allow the matrix-vector multiplications $Gy(t)$ and $Hx(t)$ of \eqref{eq:ctrl} to be evaluated each by a single polynomial multiplication.

As a motivating example, suppose that the state matrix $F$ is given as a \emph{companion matrix} as below, which can be
decomposed into
a sparse matrix and a skew-circulant matrix as follows:
\begin{equation}\label{eq:compan}
    F ={
    \begin{bmatrix}
    \begin{matrix}
        \!\vert \\ 
        F_0
    \end{matrix} &\!\!\!\!\!\! I_{n-1} \\
    \begin{matrix}
        \!\vert
    \end{matrix} &\!\!\!\!\!\!\! \mathbf{0}
    \end{bmatrix}
    } 
    =  
    \underbrace{\begin{bmatrix}
    \begin{matrix}
        \!\vert \\ F_0' \\ \!\vert
    \end{matrix}
    &\!\!
    \mathbf{0}~~~
\end{bmatrix}}_{\text{(sparse)}} + \underbrace{\begin{bmatrix}
        \mathbf{0}&\!\!\!\! I_{n-1} \\
    -1 &\!\!\!\! \mathbf{0}
\end{bmatrix}}_{=:S},
\end{equation}
where $F_0':=F_0 +[0,\ldots,0,1]^\top\in\Z^{n}$.
Then, it follows from the definition of $S$ in \eqref{eq:compan} and \eqref{eq:skew-circ} that given $\pack(z^+)$,
\begin{equation}
     \pack(Sz^+)= X^{-\frac{N}{n}}\cdot \pack(z^+).
\end{equation}
Thus, the left-hand-side of \eqref{eq:packRecurs} can also be computed as
\begin{equation}\label{eq:packMultRe}  
   \pack(Fz^+) = \pack(F_0')\cdot z_0^+ 
    +X^{-\frac{N}{n}}\cdot \pack(z^+),
\end{equation}
which only involves two polynomial multiplications and
the constant term $z_0^+$ of $\pack(z^+)$, in contrast to \eqref{eq:packRecurs}.

Yet, the above observation cannot be directly applied to \eqref{eq:ctrl}
because in general the state matrix $F$ is not given as a companion matrix or similar to it.
In fact,
a matrix is similar to a companion matrix if and only if its characteristic and minimal polynomials coincide \cite[Section 7.2]{hoffman1971algebra}.
Moreover, it should be guaranteed that the state matrix $F$ remains to be an integer matrix after the similarity transformation.

Nonetheless, interestingly, we show
that it is always possible to convert $F$ into an integer-valued \emph{rational canonical form},
a block diagonal matrix whose main-diagonal blocks are companion matrices.
In the sequel, we extend the idea of the motivating example to the general setup.

\subsection{Rational canonical form} \label{subsec:RCF}

The rational canonical form is defined by the following proposition, which also ensures its existence and uniqueness.

\begin{proposition}\cite[Theorem 12.16]{dummit2004abstract} \label{prop:RCF}
    Given $F \in \Q^{n\times n}$, there exists a nonsingular matrix $T \in \Q^{n\times n}$ such that
    % $TFT^{-1}$ takes the form of
    \begin{equation}\label{eq:FRCF}
        \bar{F}:=TFT^{-1}=\diag(C_0, C_1, \ldots, C_{\kappa-1}),
    \end{equation}
    with some $\kappa \in\N$, where $C_i$ is a companion matrix for $i = 0, 1,\ldots, \kappa-1$ such that {$\det(sI - C_i)|\det(sI - C_{i+1})$ for $i = 0, 1, \ldots, \kappa-2$}. Moreover, the matrix $\bar{F}$ is uniquely determined by $F$.
\end{proposition}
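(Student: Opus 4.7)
The plan is to invoke the classical rational canonical form theorem for modules over a principal ideal domain (PID), specializing to the polynomial ring $\Q[X]$. The key idea is to regard $\Q^n$ as a $\Q[X]$-module by letting $X$ act via $F$, that is, $X \cdot v := Fv$ for every $v \in \Q^n$; then the problem of finding $T$ reduces to choosing a suitable ordered $\Q$-basis of $\Q^n$ that respects a cyclic decomposition of this module.

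First, I would apply the structure theorem for finitely generated modules over a PID to this $\Q[X]$-module. Since the Cayley--Hamilton theorem guarantees that $\Q^n$ is a torsion module (annihilated by the characteristic polynomial of $F$), the theorem provides a direct sum decomposition $\Q^n \cong \bigoplus_{i=0}^{\kappa-1} \Q[X]/\langle p_i(X) \rangle$, where the invariant factors $p_0, p_1, \ldots, p_{\kappa-1}$ are monic polynomials in $\Q[X]$ satisfying $p_i \mid p_{i+1}$. Next, within each cyclic summand generated by some $v_i \in \Q^n$, I would take the ordered basis $\{v_i, Fv_i, \ldots, F^{d_i-1} v_i\}$ with $d_i := \deg p_i$. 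The matrix representing the action of $F$ on this summand in this basis is precisely the companion matrix $C_i$ of $p_i$, because the annihilation relation $p_i(F) v_i = 0$ expresses $F^{d_i} v_i$ as the required $\Q$-linear combination of the earlier basis vectors. Concatenating these bases across all summands yields a basis of $\Q^n$, and the corresponding rational, nonsingular change-of-basis matrix $T$ satisfies $TFT^{-1} = \diag(C_0, \ldots, C_{\kappa-1})$; the divisibility condition on the characteristic polynomials $\det(sI - C_i) = p_i(s)$ is inherited directly from $p_i \mid p_{i+1}$.

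For uniqueness, I would appeal to the uniqueness clause of the structure theorem: the invariant factors $p_i$ are uniquely determined by $F$, as they can be computed from the Smith normal form of $sI_n - F$ viewed as a matrix over $\Q[s]$, and each companion matrix $C_i$ is in turn uniquely determined by $p_i$. I expect the main obstacle to be justifying that all constructions stay within $\Q$ rather than requiring passage to an algebraic extension such as $\C$; this is guaranteed by the fact that $\Q[X]$ is itself a PID, so both the Smith normal form reduction and the extraction of cyclic generators $v_i$ can be carried out using only rational arithmetic, yielding $T \in \Q^{n \times n}$ as required by the statement.
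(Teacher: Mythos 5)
Your argument is correct, and it is essentially the proof of the result the paper relies on: the paper does not prove Proposition~1 itself but cites it as \cite[Theorem 12.16]{dummit2004abstract}, and that theorem is established exactly by your route --- view $\Q^n$ as a torsion $\Q[X]$-module with $X$ acting as $F$, apply the structure theorem over the PID $\Q[X]$ to get the invariant-factor decomposition $p_0 \mid p_1 \mid \cdots \mid p_{\kappa-1}$, read off companion blocks on the cyclic bases, and get uniqueness from the uniqueness of invariant factors (Smith normal form of $sI-F$). What the paper adds on its own is complementary rather than conflicting: Algorithm~2 in the Appendix, together with Proposition~3, gives a \emph{constructive} derivation that never invokes the abstract structure theorem. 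It uses the primary decomposition $\Q^n=\oplus_i \ker p_i(F)^{\eta_i}$, builds $F$-cyclic bases by picking generators in $\ker p_i(F)^j\setminus\ker p_i(F)^{j-1}$ in descending order of $j$, and then sums generators across the primary components to obtain the invariant-factor generators; this yields an explicit $T$ (needed for the controller reformulation) and the extra fact $\kappa=\max_{\lambda\in\C}\mathrm{nullity}(F-\lambda I)$, which your abstract argument does not directly provide. Two minor points on your write-up: with the ascending basis $\{v_i,Fv_i,\ldots,F^{d_i-1}v_i\}$ you obtain the companion matrix in the ``coefficients in the last column'' convention, whereas the paper's convention \eqref{eq:compan} (coefficient column first, needed for the sparse-plus-$S$ decomposition) corresponds to the descending ordering $\{F^{d_i-1}v_i,\ldots,Fv_i,v_i\}$ used in line~18 of Algorithm~2; and you should discard unit (constant) invariant factors so that every block has positive size --- both are cosmetic, not gaps.
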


The matrix $\bar{F}$ in Proposition~\ref{prop:RCF} is called the rational canonical form of $F$ over the field $\Q$. We present a constructive algorithm to find a transformation $T$ in the Appendix. Moreover, the following proposition guarantees that the rational canonical form of $F$ is an integer matrix if $F$ consists of integers.

\begin{proposition} \label{prop:RCF_integer}
    The rational canonical form of an integer matrix $F\in \Z^{n\times n}$ is also an integer matrix.  
\end{proposition}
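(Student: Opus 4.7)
The plan is to reduce the claim to showing that each non-constant invariant factor of $F$ lies in $\Z[s]$. Once this is established the result is immediate: by Proposition~\ref{prop:RCF}, $\bar{F} = \diag(C_0, \ldots, C_{\kappa-1})$, where each $C_j$ is the companion matrix of a monic polynomial $p_j(s) := \det(sI - C_j) \in \Q[s]$, and a companion matrix has integer entries exactly when its defining monic polynomial has integer coefficients.

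First I would invoke the determinantal-divisor description of the invariant factors. Let $d_i(s)\in\Q[s]$ denote the monic greatest common divisor of all $i \times i$ minors of $sI - F$, with the convention $d_0(s):=1$. Then the invariant factors of $F$ viewed as a $\Q[s]$-module are $d_i(s)/d_{i-1}(s)$ for $i = 1,\ldots,n$, and the non-constant ones among them are exactly the $p_j(s)$ appearing in $\bar{F}$.

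Next I would exploit the hypothesis $F \in \Z^{n \times n}$: every entry of $sI - F$ lies in $\Z[s]$, hence so does every $i \times i$ minor. In particular, $d_n(s) = \det(sI - F)$ is the characteristic polynomial, which is monic in $\Z[s]$. For each $i$, the polynomial $d_i(s)$ is a monic divisor in $\Q[s]$ of the monic integer polynomial $d_n(s)$, so Gauss's lemma applied to the factorization $d_n = d_i \cdot (d_n / d_i)$ in $\Q[s]$ forces $d_i(s) \in \Z[s]$. A second application of Gauss's lemma, this time to the monic factorization $d_i = (d_i / d_{i-1}) \cdot d_{i-1}$ in $\Q[s]$ of the monic integer polynomial $d_i$, yields $d_i / d_{i-1} \in \Z[s]$. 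Thus every invariant factor, and in particular each $p_j(s)$, has integer coefficients, so every $C_j$ and hence $\bar{F}$ is an integer matrix.

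The main obstacle is not computational but lies in invoking Gauss's lemma in the correct form: one must work consistently with monic factorizations, since it is precisely the monicity that rules out spurious rational scaling factors and enables the passage from $\Q[s]$-divisibility to $\Z[s]$-membership. An alternative route would be to put $sI - F$ into Smith normal form over $\Q[s]$ and argue directly that the diagonal entries are pinned down by the $\Z[s]$-valued minors, but the determinantal-divisor formulation above seems cleaner and more self-contained for this short statement.
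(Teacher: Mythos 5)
Your proof is correct and rests on the same key step as the paper's: each block polynomial $\det(sI-C_j)$ is a monic divisor over $\Q$ of the monic integer characteristic polynomial $\det(sI-F)$, hence lies in $\Z[s]$ by Gauss's lemma, and a companion matrix of an integer monic polynomial is an integer matrix. The only difference is cosmetic---your detour through the determinantal divisors $d_i(s)$ and a second application of Gauss's lemma is not needed, since one application directly to the block polynomials as monic divisors of $\chi_F(s)$ (as the paper does) already suffices.
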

\begin{proof}
    Consider the rational canonical form \eqref{eq:FRCF} of $F$.
    Then, its characteristic polynomial
    $\chi_F(s) := \prod_{i=0}^{\kappa-1}\det(sI-C_i)$ is an integer polynomial.
    By Gauss' lemma \cite[Theorem 2.1]{marcus1977number}, every monic divisor of $\chi_F(s)$ over the field $\Q$ is also an integer polynomial, which implies that $\det(sI-C_i)$ is an integer polynomial for $i=0,\,\ldots,\,\kappa-1$.
    Since the first column of a companion matrix consists of the coefficients of its characteristic polynomial, the proof is concluded.
\end{proof}

Similarly to \eqref{eq:compan}, we can represent the rational canonical form $\bar{F}$ as a sum of a sparse matrix and $S$;
for example, when $n=4$, $\kappa=2$, and $\det(sI-C_0) = \det(sI-C_1) = s^2-s-2$,
\NiceMatrixOptions
{
    custom-line ={command= H, tikz= dashed, width= 1mm}, % horizontal, dashed
    custom-line = {letter= I, tikz= dashed, width= 1mm}, % vertical, dashed
}
\NiceMatrixOptions{cell-space-limits=1pt}
\begin{align}\label{eq:F_example} 
    \bar{F}=\begin{bNiceArray}{cc I cc }[margin=2pt]
     1 & 1 & 0 & 0 \\
     2 & 0 & 0 & 0 \\\H
     0 & 0 & 1 & 1 \\
     0 & 0 & 2 & 0
\end{bNiceArray} = 
\begin{bNiceArray}{c I c I c I c }[margin=2pt] 
     1 & 0 & 0 & 0 \\
     2 & 0 & -1 & 0 \\
     0 & 0 & 1 & 0 \\
     1 & 0 & 2 & 0
\end{bNiceArray} + S.
\end{align}

\begin{remark}
    The number of nontrivial columns in the sparse matrix $\bar{F} - S$ is $\kappa$, the number of the companion blocks in the rational canonical form $\bar{F}$.
    By Proposition~\ref{prop:kappa} in the Appendix,
    $
    \kappa =  \max_{\lambda \in \mathbb{C}} \mathrm{nullity}(F-\lambda I)
    $.
\end{remark}

\subsection{Reformulation of controller over \texorpdfstring{$R_q$}{Rq}}

We begin by converting the controller \eqref{eq:ctrl} to operate over $\Z_q$, so that its signals and matrices can be packed into polynomials in $R_q$.
With the messages
$\{H,G,y(t),x^\ini\}$
and the transformation $T$ in \eqref{eq:FRCF} 
consisting of rationals, let
\begin{equation} \label{eq:quantize_param}
\begin{split}
        \tG &:= \frac{TG}{\scas_1} \in \Z^{n\times p},  ~~~~~\;\;\tH := \frac{HT^{-1}}{\scas_2} \in \Z^{m\times n},  \\
    z^\ini &:= \frac{Tx^\ini}{\scaL\scas_1} \in \Z^n,  ~~~~~\;\bar{y}(t):=\frac{y(t)}{\scaL} \in \Z^p,
\end{split}
\end{equation}
with some scale factors $1/\scas_1\in\N$, $1/\scas_2\in\N$, and $1/\scaL\in\N$.
Then, the controller \eqref{eq:ctrl} can be implemented over $\Z_q$, as
\begin{equation}\begin{split} \label{eq:ctrl_TF}
    z(t+1) &= \tF z(t) + \tG \bar{y}(t) \modp q, \\
    \bar{u}(t) &= \tH z(t) \modp q, ~~ z(0) = z^\ini \modp q.
\end{split}    
\end{equation}

\begin{remark} \label{ref:n-powerof2}
    If $n$ is not a power-of-two,
    then one can
    {increase the order of \eqref{eq:ctrl_TF} to $\bar{n}:=2^{\lceil \log_2 n\rceil}$,
    by replacing
    $C_{\kappa-1}$ of $\tF$ with another companion matrix $C_{\kappa-1}^\prime$ such that}
    $\det(sI-C_{\kappa-1}^\prime)=s^{\bar{n}-n}\det(sI-C_{\kappa-1})$.
    The matrices $\tG$ and $\tH$ can be expanded
    as $[\tG^\top,\, \mathbf{0}]^\top$ and $[\tH,\,\mathbf{0}]$, respectively.
\end{remark}

Now we reformulate the state matrix-vector multiplication $\bar{F}z(t)$ into polynomial multiplications over $R_q$. 
Note that the matrix $\tF-S$ is sparse, having only $\kappa$-nontrivial columns, which we denote by $\bar{F}_{0}',\bar{F}_{1}',\ldots,\bar{F}_{{\kappa-1}}' \in \Z^n$. We also define $r_i$ such that $\bar{F}_i'$ is the $(r_i + 1)$-th column of $\tF - S$ for $i=0,\ldots,\kappa-1$.
For example in \eqref{eq:F_example}, $r_0 = 0$, $r_1 = 2$, $\bar{F}_0'=\left[1,\,2,\,0,\,1\right]^\top$,  and $\bar{F}_1'=\left[0,\,-1,\,1,\,2\right]^\top$.
Then, $\bar{F}z(t)$ can be reformulated over $R_q$, analogously to \eqref{eq:packMultRe}, as
\begin{align}\label{eq:pack_Fz} 
    \tilde{\mathrm{F}}_i &:= \pack(\bar{F}^\prime_i),\quad i=0,\,\ldots,\,\kappa-1, \\
    \pack(\tF z(t)) &=\textstyle \sum_{i=0}^{\kappa -1} \tilde{\mathrm{F}}_i  \cdot z_{r_i}(t) \!+\! X^{-\frac{N}{n}} \cdot \pack(z(t)), \nonumber
\end{align}
where $z_{r_i}(t) := \Slot_1(X^{-r_i\frac{N}{n}}\cdot \pack(z(t)))\in\Z_q$.
Compared to \eqref{eq:packRecurs}, it should be noted that \eqref{eq:pack_Fz} only involves $(\kappa+1)$-polynomial multiplications and only unpacks $\kappa$-packing slots $\{z_{r_i}(t) \}_{i=0}^{\kappa-1}$ of $\pack(z(t))$.

Next, we present novel packing methods that enable each of the non-recursive parts $\bar{G}\bar{y}(t)$ and $\bar{H}z(t)$ in \eqref{eq:ctrl_TF} to be computed by a single polynomial multiplication.
Let the input $\bar{y}(t) =: [\bar{y}_0(t) , \bar{y}_1(t), \ldots, \bar{y}_{p-1}(t)]^\top $ be packed as 
\begin{equation}\label{eq:pack_y}
    \!\!\!\tilde{\py}(t)
    \!:=\! \bar{y}_0(t) + \bar{y}_1(t)X + \cdots + \bar{y}_{p-1}(t)X^{p-1}\modp q \in\! R_q.\!
\end{equation}
Also, let each row of $\bar{G}=:(\bar{G}_{i,j}) \in \Z^{n\times p}$ be packed as
\begin{align}
    \tilde{\pg}_i &:= \bar{G}_{i,0}+\bar{G}_{i,1}X^{-1} \! +\!\cdots\! +\! \bar{G}_{i,p-1}X^{-(p-1)} \modp q \in R_q,
\end{align}
for $i=0,\ldots,n-1$.
Then, it follows from \eqref{eq:arithPol} that 
\begin{equation}\label{eq:gRowMult}
    \tilde{\pg}_i\cdot\tilde{\py}(t) = \textstyle
    \sum_{j=0}^{p-1} \bar{G}_{i,j}\bar{y}_j(t) + \textstyle\sum_{j=-(p-1), \, j\ne 0}^{p-1} \pr_{i,j} X^{j}
\end{equation}
for some $\pr_{i,j}\in\Z_q$,
where the constant term $\sum_{j=0}^{p-1} \bar{G}_{i,j}\bar{y}_j(t) $ is the $(i+1)$-th element of $\bar{G}\bar{y}(t)$.
Based on this observation, we pack the entire matrix $\bar{G}$ into a single polynomial, as
\begin{equation}\label{eq:pg}
    \tilde{\pg}  :=\textstyle\sum_{i=0}^{n-1}\tilde{\pg}_i\cdot X^{i\frac{N}{n}}.
\end{equation}
This enables $\bar{G}\bar{y}(t)$ to be computed as
\begin{align}\label{eq:trick_gy}
    \Slot_n(\tilde{\pg} \cdot  \tilde{\py}(t))& = \Slot_n(\textstyle\sum_{i=0}^{n-1} (\tilde{\pg}_i\cdot \tilde{\py}(t))\cdot X^{i\frac{N}{n}} )\nonumber\\
    &=\pack(\bar{G}\bar{y}(t)),
\end{align}
assuming
that $np \le N$; this assumption prevents the coefficients $\pr_{i,j}$ in \eqref{eq:gRowMult} from affecting the packing slots. Note that $N$ is typically chosen as a large number to ensure security.

For the matrix $\tH=:(\bar{H}_{i,j})\in \Z^{m \times n}$, let each row be packed as 
\begin{equation}
    \tilde{\pH}_i \!:=\! \bar{H}_{i,0} + \bar{H}_{i,1}X^{-\frac{N}{n}} \!+ \cdots + \bar{H}_{i,n-1}X^{-(n-1)\frac{N}{n}}\modp q \in\! R_q,
\end{equation}
for $i=0,\ldots,m-1$.
Then, it is immediate from \eqref{eq:arithPol} and the definition of $\pack(\cdot)$ that for any $z=[z_0,\ldots,z_{n-1}]^\top\in\Z_q^n$,
\begin{equation}\label{eq:HzPack}
    \tilde{\pH}_i\cdot \pack(z) =  \textstyle \sum_{j=0}^{n-1} \bar{H}_{i,j}z_j + \sum_{j=1}^{n-1} \pr_j X^{j\frac{N}{n}},
\end{equation}
for some $\pr_j\in \Z_q$, where the constant term $\sum_{j=0}^{n-1} \bar{H}_{i,j}z_j$ is the $(i+1)$-th element of $\bar{H}z$.
Using $\{\tilde{\pH}_i\}_{i=0}^{m-1}$, we define the packed polynomial of $\bar{H}$ by
\begin{equation} \label{eq:pack_h}
    \tilde{\mathrm{H}} := \textstyle\sum_{i=0}^{m-1}\tilde{\pH}_i \cdot X^{i\frac{N}{n\tau}} \in R_q,
\end{equation}
where $\tau = 2^{\lceil \log_2m\rceil}$ is the least power-of-two not less than $m$.
We make a mild assumption that $\tau \le N/n$, which ensures that the elements of $\bar{H}$ do not overlap and reside in the $N/(n\tau)$-equidistant coefficients of $\tilde{\pH}$.
Then, with an \textit{unpacking operation} $\upack:R_q \to \Z_q^m$ defined by
\begin{equation} \label{eq:unpack_u}\begin{split}
\upack(\pa) := [\pa_0, \pa_\frac{N}{n\tau}, \ldots, \pa_{(m-1)\frac{N}{n\tau}}]^\top,
\end{split} \end{equation}
for a polynomial $\pa(X) = \sum_{i=0}^{N-1}\pa_iX^i$, the matrix-vector multiplication $\bar{H}z$ can be computed over $R_q$ by means of
\begin{align}\label{eq:trick_hx}
    &\upack(\tilde{\pH}\cdot\pack(z)) \\ &=\upack(\textstyle\sum_{i=0}^{m-1}(\tilde{\pH}_i\cdot \pack(z)) \cdot X^{i\frac{N}{n\tau}}) = \bar{H}z\modp q.\nonumber
\end{align}

Finally, combining \eqref{eq:pack_Fz}, \eqref{eq:trick_gy}, and \eqref{eq:trick_hx}, we reformulate the controller \eqref{eq:ctrl_TF} to operate over $R_q$, as
\begin{subequations}\label{eq:ctrl_packed}
\begin{align}
    \tilde{\pz}(t+1) \!&=\! \textstyle\sum_{i=0}^{\kappa -1} \tilde{\mathrm{F}}_i \cdot
    \tilde{\pz}_{r_i}(t)\!
    +\!X^{-\frac{N}{n}}\cdot\tilde{\pz}(t) \!+\!  \tilde{\pg}\cdot \tilde{\py}(t) ,\label{eq:ctrl_packedstateeq}\\
    \tilde{\pu}(t) \!&=\! \tilde{\mathrm{H}} \cdot \Slot_n(\tilde{\pz}(t)),~~\tilde{\pz}(0) \!= \!\pack( z^\ini).\label{eq:ctrl_packed_outputeq}
\end{align}
where $\tilde{\pz}_{r_i}(t):=\Slot_1(X^{-r_i\frac{N}{n}}\cdot \tilde{\pz}(t))\in \Z_q$. 
The controller output $u(t)\in \Q^m$ can be obtained from $\tilde{\pu}(t)$, by
\begin{align}
    u(t) = \left(\scaL\scas_1\scas_2\right)\cdot\upack (\tilde{\pu}(t)). \label{eq:u_packed}
\end{align}
\end{subequations}

The following theorem states the correctness of the proposed reformulated controller \eqref{eq:ctrl_packed}, as long as the modulus $q$ is chosen sufficiently large to prevent an overflow, which is indeed clear from the construction.

\begin{theorem} \label{thm:correctness}
    Consider the original controller \eqref{eq:ctrl} and the reformulated controller \eqref{eq:ctrl_packed}. If 
    \begin{equation} \label{eq:thm_bdd}
        \sup_{t \ge 0} \left\{\bigg\lVert \frac{Tx(t)}{\scaL\scas_1} \bigg\rVert , \bigg\lVert \frac{u(t)}{\scaL\scas_1\scas_2} \bigg\rVert \right\} < \frac{q}{2},
    \end{equation}
    then the output $u(t)$ of the reformulated controller \eqref{eq:u_packed} equals to that of the original controller \eqref{eq:ctrl}, for all $t \ge 0$.  
\end{theorem}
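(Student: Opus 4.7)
The plan is to propagate correctness through three levels: the original real-valued controller \eqref{eq:ctrl}, the integer realization \eqref{eq:ctrl_TF}, and the packed implementation \eqref{eq:ctrl_packed}. The target identities are $z(t) = Tx(t)/(\scaL\scas_1)$, $\bar{u}(t) = u(t)/(\scaL\scas_1\scas_2)$, and $\Slot_n(\tilde{\pz}(t)) = \pack(z(t))$. For the first pair, I would define the noiseless counterpart of \eqref{eq:ctrl_TF} by removing the $\modp q$ operations and show by induction that the resulting state and output coincide with $Tx(t)/(\scaL\scas_1)$ and $u(t)/(\scaL\scas_1\scas_2)$, respectively; the base case uses $z(0)=z^\ini$ and the inductive step is a one-line calculation from $\tF=TFT^{-1}$ and \eqref{eq:quantize_param}. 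Proposition~\ref{prop:RCF_integer} together with \eqref{eq:quantize_param} ensures these quantities are integer-valued, and under \eqref{eq:thm_bdd} they lie strictly inside $\Z_q$ element-wise, so the $\modp q$ reductions in \eqref{eq:ctrl_TF} act as the identity.

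For the third identity, I would induct on $t$ with base case $\tilde{\pz}(0)=\pack(z^\ini)$. In the inductive step, the selector $\tilde{\pz}_{r_i}(t) = \Slot_1(X^{-r_i \frac{N}{n}}\cdot\tilde{\pz}(t))$ reads only the packing-slot coefficient at position $r_i\frac{N}{n}$, which by the induction hypothesis equals $z_{r_i}(t)$. Combined with \eqref{eq:pack_Fz} and \eqref{eq:trick_gy}, the packing-slot content of the right-hand side of \eqref{eq:ctrl_packedstateeq} becomes $\pack(\tF z(t) + \tG\bar{y}(t)) = \pack(z(t+1))$, hence $\Slot_n(\tilde{\pz}(t+1)) = \pack(z(t+1))$. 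The output identity $\upack(\tilde{\pu}(t)) = \bar{u}(t)$ then follows from \eqref{eq:trick_hx} applied to $\Slot_n(\tilde{\pz}(t)) = \pack(z(t))$, and \eqref{eq:u_packed} recovers $u(t) = Hx(t)$.

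The main obstacle is the inductive step just above: one must carefully argue that the off-slot ``garbage'' coefficients generated by $\tilde{\pg}\cdot\tilde{\py}(t)$ and carried through the shift $X^{-\frac{N}{n}}$ never pollute packing-slot positions. The mild assumption $np \le N$ confines every such coefficient to the strict gaps between consecutive packing slots, while $X^{-\frac{N}{n}}$ cyclically permutes packing slots among themselves and shifts gaps to gaps; hence $\Slot_1$ and $\Slot_n$ extract exactly the intended values at every time step.
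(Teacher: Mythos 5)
Your proposal is correct and follows essentially the same route as the paper: the bound \eqref{eq:thm_bdd} lets the modular reductions in \eqref{eq:ctrl_TF} be removed so that $\scaL\scas_1 z(t)=Tx(t)$ and $\scaL\scas_1\scas_2\bar{u}(t)=u(t)$, and then \eqref{eq:pack_Fz}, \eqref{eq:trick_gy}, and \eqref{eq:trick_hx} give $\Slot_n(\tilde{\pz}(t))=\pack(z(t))$ and $\upack(\tilde{\pu}(t))=\bar{u}(t)$ by induction on $t$. Your added care about off-slot coefficients (that the skew-shift $X^{-N/n}$ maps slots to slots and gaps to gaps, and that only slot positions are ever read) is exactly the detail the paper leaves implicit in the phrase ``clear from the construction,'' so there is no substantive difference in approach.
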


\begin{proof}
    The boundedness condition \eqref{eq:thm_bdd} ensures that the original controller \eqref{eq:ctrl} and the controller \eqref{eq:ctrl_TF} over $\Z_q$ are equivalent, that is $\scaL\scas_1\cdot z(t) = Tx(t)$ and $\scaL\scas_1\scas_2\cdot \bar{u}(t) = u(t)$ for all $t \ge 0$, since the modular operation in \eqref{eq:ctrl_TF} can be removed. Also, with \eqref{eq:pack_Fz}, \eqref{eq:trick_gy}, and \eqref{eq:trick_hx}, it follows that $\Slot_n(\tilde{\pz}(t)) = \pack(z(t))$ and $\bar{u}(t)= \upack(\tilde{\pu}(t))$, for all $t \ge 0$, and this concludes the proof.
\end{proof}

\begin{remark}
    The condition \eqref{eq:thm_bdd} is quite a natural one; if the controller stabilizes a plant, then the closed-loop stability implies the boundedness. What remains is to choose the modulus $q$ as a sufficiently large prime number.
\end{remark}

\section{Encrypted Controller Design} \label{sec:enc}

Finally,
we implement the {reformulated controller \eqref{eq:ctrl_packed}} over the Ring-LWE based cryptosystem.
It will be shown that the effect of \emph{error growths} due to encryptions and homomorphic operations can be interpreted as bounded perturbations applied to {\eqref{eq:ctrl_packed}}.
The efficiency of the proposed encrypted controller is discussed in terms of required computational burden and memory, compared to the prior work \cite{Jang2024RLWE}.

\subsection{Ring-LWE based cryptosystem} \label{subsec:rlwe}

We utilize the Ring-LWE based scheme of \cite{lyubashevsky2013ideal}, along with the external product \cite{chillotti2016rGSW} for multiplications.
We further adopt the special modulus technique \cite{GentHale12} for the external product, which reduces the growth of errors occurred by the external product.
To begin with, the set of parameters $\lbrace \nu, P, \sigma \rbrace$ is selected; $\nu\in\N$ is a power-of-two, the special modulus $P\in \N$ is a prime number, and $\sigma >0$ is a bound of the \textit{error polynomials}, which are injected during encryption in order to ensure security.
Let $d := \lceil \log_\nu q\rceil$.

\begin{table}
    \centering
    \vspace{5pt}
    \caption{Algorithms of Ring-LWE based cryptosystem}
    \label{tab:RLWE}
    \renewcommand{\arraystretch}{1.4}
    \begin{tabular}{l|l}
       \hline
        & Algorithm \\
        \hline
        Encryption & $\enc: R_q \to R_q^2$\\
        Decryption & $\dec: R_q^2 \to R_q$\\
        Encryption for multiplier & $\enc': R_q \to R_{qP}^{2\times 2d}$\\
        External product & $\boxdot:R_{qP}^{2\times 2d} \times R_q^2 \to R_q^2$\\
        Automorphism for plaintext & $\Psi_\theta(\cdot):R_q\to R_q$ \\
        Automorphism for ciphertext & $\Phi_\theta(\cdot,\ak_\theta):R_q^2 \to R_q^2$\\
        \hline
        \multicolumn{2}{l}{* $\theta\in\N$ is an odd number, and $\ak_\theta \in R_{qP}^{2 \times 2d}$ is the automorphism key.}
    \end{tabular}
\end{table}

The algorithms of the Ring-LWE based cryptosystem that we use are listed in Table~\ref{tab:RLWE},
where we follow the same notation in \cite{Jang2024RLWE} for consistency.
The algorithm $\enc$ encrypts \textit{plaintexts} (messages) in $R_q$ to be \textit{ciphertexts} (encrypted data) in $R_q^2$.
The algorithm $\enc^\prime$ returns a ciphertext in $R_{qP}^{2\times 2d}:= (\Z_{qP}[X]/\langle X^N+1\rangle)^{2\times 2d}$ that is used as a multiplier of the external product.
The automorphism for plaintext operates as $\Psi_\theta(\pa(X)) = \pa(X^\theta)$ where $\pa \in R_q$ and $\theta \in \N$ is an odd number.
Note that the automorphism $\Phi_\theta(\cdot, \ak_\theta)$ for ciphertexts requires the \textit{automorphism key} $\ak_\theta$.

The algorithms of Table~\ref{tab:RLWE} satisfy the following properties \cite{Jang2024RLWE}
for any $\pmes\in R_q$, $\ciph\in R_q^2$, and $\ciph'\in R_q^2$:
\begin{enumerate}[align=left,leftmargin=*]
    \item [(H1)] $\left\| \dec(\enc(\pmes)) - \pmes \right\| \le \sigma$.
    \item[(H2)] $\dec(\ciph + \ciph') =\dec(\ciph) + \dec(\ciph')$.
    \item[(H3)] $\dec(\pmes\cdot \ciph) =\pmes \cdot \dec(\ciph)$.
    \item[(H4)] $\left\| \dec(\enc'(\pmes)\boxdot \ciph) - \pmes \cdot \dec(\ciph) \right\| \le \sigma_\mult := {P^{-1}dN\sigma\nu + (N+1)/2}$.
    \item[(H5)] $\left\|\dec(\Phi_\theta(\ciph, \ak_\theta)) - \Psi_\theta(\dec(\ciph)) \right\| \le\sigma_\mult$, where $\theta\in\N$ is an odd number.
\end{enumerate}
The property (H1) indicates the \textit{correctness} of the scheme,
and (H2) is the \textit{additively homomorphic} property.
It is stated by (H3) that a plaintext can be multiplied to a ciphertext.
Lastly, (H4) and (H5) enable the multiplication and the automorphism to be carried out over encrypted data, respectively.

\begin{algorithm}[t]
\caption{Trace}
\begin{algorithmic}[1]
\renewcommand{\algorithmicrequire}{\textbf{Procedure}}
\renewcommand{\algorithmicensure}{\textbf{Input}}
\Ensure Power-of-twos $\alpha$ and $\beta$ such that $1 \le \alpha < \beta \le N$ and $\ciph \in R_q^2$
\Require $\Tr_{\beta}^{\alpha}(\ciph)$
\State Prepare $\ak_\theta$ for $\theta \!\in\! \{ 2^\delta + 1 \!:\! \log_2\alpha < \delta \le \log_2\beta, ~ \delta \in \mathbb{N}\}$
\State $\ciph' \gets \ciph$
\For{($k = \beta$; $k>\alpha$; $k = k/2$)}
    \State $\ciph' \gets (q+1)/2 \cdot \ciph'$
    \State $\ciph' \gets \ciph' + \Phi_{k+1}(\ciph', \ak_{k+1})$
\EndFor
\State \textbf{return} $\ciph' \in R_q^2$
\end{algorithmic}
\label{alg:Trace}
\end{algorithm}

Now we need a homomorphic evaluation of the slot operation $\Slot_\alpha$ in \eqref{eq:slotunpack}. 
In fact, it is not necessary to zero-out all the coefficients except for the $N/\alpha$-equidistant coefficients. 
To illustrate this point, observe that for any power-of-two $\alpha$, and $\pa \in R_q$ and $\pb\in R_q$, it holds that
\begin{align} \label{eq:slot_mult}
    \Slot_\alpha(\pa)\cdot \Slot_\alpha (\pb) = \Slot_\alpha(\Slot_\alpha(\pa )\cdot \pb),
\end{align}
which can be easily derived from \eqref{eq:skew-circ}.
The implication of \eqref{eq:slot_mult} is that the coefficients of $\pb$ which are not $N/\alpha$-equidistant do not affect the $N/\alpha$-equidistant ones of the multiplication result.
Hence, for example in \eqref{eq:ctrl_packed_outputeq}, it is enough to zero-out only the $N/(n\tau)$-equidistant coefficients of {$\tilde{\pz}(t)$} that are not $N/n$-equidistant since the coefficients of $\tilde{\mathrm{H}}$ are $N/(n\tau)$-equidistant. 
In this regard, we use the Trace algorithm $\Tr_\beta^\alpha(\cdot)$ in Algorithm~\ref{alg:Trace}, a slight generalization of \cite[Algorithm~1]{Chen2021Trace}, which homomorphically eliminates all the $N/\beta$-equidistant coefficients except the $N/\alpha$-equidistant ones in the sense of the following lemma.

\begin{lemma} \label{lem:tr}
Given power-of-twos $\alpha$ and $\beta$ such that $1 \le \alpha < \beta \le N$, it holds for any $\ciph \in R_q^2$ that
\begin{equation} \label{eq:lemtr}
\Slot_\beta(\dec(\Tr_\beta^\alpha(\ciph))) = \Slot_\alpha(\dec(\ciph)) + \Delta,
\end{equation}
for some $\Delta \in R_q$ such that $||\Delta|| < \sigma_\mult\cdot \log_2 (\beta/\alpha)$.
\end{lemma}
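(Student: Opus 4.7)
The plan is induction on the loop counter in Algorithm~\ref{alg:Trace}. Let $L := \log_2(\beta/\alpha)$, let $\ciph_j$ denote the value of $\ciph'$ at the end of iteration $j$ (so $\ciph_0 = \ciph$ and $\ciph_L = \Tr_\beta^\alpha(\ciph)$), and let $k_j := \beta/2^{j-1}$ be the value of $k$ used in iteration $j$. Combining (H2), (H3), (H5) with the two assignments inside the loop body would give the per-iteration decryption recurrence
\[
\dec(\ciph_j) \,=\, 2^{-1}\!\bigl(\dec(\ciph_{j-1}) + \Psi_{k_j+1}(\dec(\ciph_{j-1}))\bigr) + e_j, \quad \|e_j\|\le\sigma_\mult,
\]
where $2^{-1}$ stands for $(q+1)/2\in\Z_q$; the scaling in line~4 is exact by (H3), while the automorphism in line~5 introduces $e_j$ by (H5).

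Next I would apply $\Slot_\beta$ to this recurrence. Since $k_j+1$ is odd and $k_j$ divides $\beta$, the automorphism $\Psi_{k_j+1}$ permutes the $N/\beta$-equidistant exponents (with signs coming from $X^N = -1$), hence commutes with $\Slot_\beta$. A direct computation using $X^{k_j m N/\beta} = (-1)^{m/(\beta/k_j)}$ whenever $(\beta/k_j)\mid m$ shows that the noise-free map $\pa\mapsto 2^{-1}(\pa+\Psi_{k_j+1}(\pa))$ preserves the coefficient at $X^{mN/\beta}$ when $(\beta/k_j)\mid m$ and $m/(\beta/k_j)$ is even, zeroes it out when $(\beta/k_j)\mid m$ and $m/(\beta/k_j)$ is odd, and merely shuffles coefficients between the remaining $N/\beta$-equidistant positions. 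A straightforward induction on $j$ then shows that the only surviving $N/\beta$-equidistant coefficients of the noise-free output after $j$ iterations are the $N/(\beta/2^j)$-equidistant ones, inherited unchanged from $\Slot_\beta(\dec(\ciph))$; at $j = L$ these coincide with the $N/\alpha$-equidistant positions, matching $\Slot_\alpha(\dec(\ciph))$ exactly.

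For the error term $\Delta := \Slot_\beta(\dec(\ciph_L)) - \Slot_\alpha(\dec(\ciph))$, I would unroll the recurrence to write $\Delta$ as a sum of $L$ contributions, one per $\Slot_\beta(e_j)$, each propagated through the halving-and-symmetrization steps of iterations $j+1,\ldots,L$. Combining $\|\Psi_\theta(\cdot)\| = \|\cdot\|$ with the parity cancellation that made the plaintext analysis exact (so that each halving acts as a genuine integer division by $2$ on the relevant coefficients rather than as multiplication by $(q+1)/2$), the triangle inequality would give $\|\Delta\| \le \sum_{j=1}^{L} \|e_j\| \le L\sigma_\mult = \sigma_\mult \log_2(\beta/\alpha)$.

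The main obstacle is this final noise bound: a generic $\Z_q$-multiplication by $(q+1)/2$ turns a small odd coefficient into one of magnitude close to $q/2$, so obtaining additive rather than multiplicative growth requires carefully verifying, at every surviving $N/\beta$-equidistant position and across all $L$ successive halvings, that the quantities being halved are always even and hence genuinely halved by $2^{-1}$; the structural analysis of the plaintext part in step two is exactly the tool that makes this possible, but it must be extended to the propagating noise terms as well.
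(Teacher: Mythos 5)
Your overall plan (induction over the loop of Algorithm~\ref{alg:Trace}, the per-iteration decryption recurrence from (H2), (H3), (H5), the commutation of $\Slot_\beta$ with the automorphisms $\Psi_{k_j+1}$, and the exactness of the noise-free part at the surviving positions) is the natural route and is in line with the proof this paper defers to, namely that of \cite[Lemma 2]{Jang2024RLWE}; your plaintext analysis in the second step is correct. The gap is precisely the point you flag at the end and leave unresolved, and it is not a routine verification: the evenness that makes the noise-free part exact comes from the fact that, at every relevant position, the quantity being multiplied by $(q+1)/2$ is an exact double ($\pa_i+\pa_i$) or an exact cancellation ($\pa_i-\pa_i$). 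Nothing of the sort holds for the key-switching errors $e_j$ injected by (H5), whose coefficients have arbitrary parity. Concretely, take $\beta/\alpha=4$, $\dec(\ciph)=0$, $e_1=X^{N/\beta}$, $e_2=0$ (all admissible under (H5)); unrolling the two iterations gives a final decryption equal to $\tfrac{q+1}{2}\,X^{N/\beta}+\tfrac{q+1}{2}\,X^{N/2+N/\beta}$, whose two nonzero coefficients sit at $N/\beta$-equidistant (but not $N/\alpha$-equidistant) positions and have magnitude $(q-1)/2$. So the propagated-noise terms $2^{-1}(e_{j,i}\pm e_{j,i'}) \bmod q$ are of size about $q/2$ whenever the sum is odd, the triangle inequality does not yield $\|\Delta\|\le\sigma_\mult\log_2(\beta/\alpha)$ along your route, and the ``extension of the evenness argument to the propagating noise terms'' that you propose is false in general rather than merely unchecked.

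Consequently, your third step does not establish the stated bound: the contributions of $e_j$ for $j<L$ to the non-surviving slot positions of $\Slot_\beta(\dec(\Tr_\beta^\alpha(\ciph)))$ are subjected to genuine mod-$q$ multiplications by $(q+1)/2$, and these positions do enter $\Delta$ as the lemma is stated. A complete argument must reorganize the error bookkeeping so that the (H5) errors are never pushed through a halving of a possibly odd quantity (or must invoke structure of the errors beyond properties (H1)--(H5)); the paper itself gives no details, citing the proof of \cite[Lemma 2]{Jang2024RLWE}, so whatever ingredient closes this step there is exactly what is missing from your sketch. Until you either restrict the claim to the $N/\alpha$-equidistant coefficients (where your analysis does give an additive $\sigma_\mult$ per iteration) or supply a noise recursion that avoids rescaling previously injected errors, the bound $\|\Delta\|<\sigma_\mult\log_2(\beta/\alpha)$ remains unproven in your write-up.
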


\begin{proof}
    The proof is a straightforward extension of the proof of \cite[Lemma 2]{Jang2024RLWE}, hence is omitted.
\end{proof}

\subsection{Encrypted controller design}

Now we propose the encrypted controller design based on the reformulated controller \eqref{eq:ctrl_packed}.
The offline procedure is as follows:
Given the controller \eqref{eq:ctrl}, a coordinate transformation matrix is found by Algorithm~\ref{alg:RCF} in the Appendix.
The transformed controller
is scaled-up as \eqref{eq:ctrl_TF}, and the matrices $\tF$, $\tG$, and $\tH$ are packed into polynomials as in \eqref{eq:pack_Fz}, \eqref{eq:pg}, and \eqref{eq:pack_h}, respectively.
We encrypt these packed control parameters as
\begin{equation}\label{eq:encparam}
    \ciphf_i:=\enc^\prime(\tilde{\mathrm{F}}_i),\quad    \ciphg:=\enc^\prime(\tilde{\pg}),\quad\ciphh:=\enc^\prime(\tilde{\mathrm{H}}),
\end{equation}
for $i=0,\,\ldots,\,\kappa-1$.
We also generate automorphism keys $\ak_\theta$ for $\theta\in\lbrace 2^\delta+1: 0 < \delta < \log_2 n\tau, \delta \in \N \rbrace$, which are required to execute Algorithm~\ref{alg:Trace}.
The initial state of \eqref{eq:ctrl_TF} is packed and then encrypted, as $\ciphz^\ini:=\enc(\pack(z^\ini))$.

From the proposed reformulation \eqref{eq:ctrl_packed}, the encrypted controller is constructed as
\begin{subequations}\label{eq:ctrl_enc}
\begin{align}
    \ciphz(t+1) &= 
    \textstyle\sum_{i=0}^{\kappa -1}\ciphf_i\boxdot \Tr_n^1(X^{-r_i\frac{N}{n}}\cdot\ciphz(t))+X^{-\frac{N}{n}}\cdot\ciphz(t)\notag \\
    &~~~~ + \ciphg \boxdot \ciphy(t),\label{eq:encFxGu}\\
\ciphu(t)&=\ciphh \boxdot \Tr_{n\tau}^n(\ciphz(t)), \qquad \ciphz(0) = \ciphz^\ini,\label{eq:encHx}
\end{align}
where
$\ciphz(t)\in R_q^2$ is the state,
$\ciphy(t)\in R_q^2$ is the input sent from the plant, and $\ciphu(t)\in R_q^2$ is the output.
Here, we do not encrypt $X^{-\frac{N}{n}}$ part of \eqref{eq:ctrl_packedstateeq} since the skew-circulant matrix $S$ essentially contains no information.
Notably, the operation $\Tr_n^1(\cdot)$ is applied to $\ciphz(t)$ to homomorphically unpack only $\kappa$-packing slots, each multiplied with the packed and encrypted nontrivial columns $\ciphf_i$, taking advantage of the rational canonical form.

At each time step, the plant output $y(t)$ is scaled, packed as in \eqref{eq:pack_y}, and then encrypted as $\ciphy(t)=\enc(\tilde{\py}(t))$.
The output $\ciphu(t)$ is sent to the actuator, where it is decrypted, unpacked, and scaled to obtain the control input $u(t)$, as
\begin{align} \label{eq:act}
    u(t) = \left(\scaL\scas_1\scas_2\right)\cdot\upack(\dec(\ciphu(t))).
\end{align} 
\end{subequations}

\begin{proposition} \label{thm:error_growth1}
Consider the encrypted controller \eqref{eq:ctrl_enc}, and define $\bar{\pz}(t) := \dec(\ciphz(t))$ for $t \ge 0$. Then,
compared to \eqref{eq:ctrl_packed},
the messages of the encrypted controller \eqref{eq:ctrl_enc} %is equivalent to 
obey
%the following perturbed system 
\begin{align}
    \hspace{-3pt} \bar{\pz}(t+1) \!&=\! \textstyle\sum_{i=0}^{\kappa -1} \tilde{\mathrm{F}}_i \cdot
    \bar{\pz}_{r_i}(t)\!  \nonumber
    +\!X^{-\frac{N}{n}}\cdot\bar{\pz}(t) \!+\!  \tilde{\pg}\cdot \tilde{\py}(t) \!+\!\Delta_z(t),  \\
    u(t)\! &= \!\left(\scaL\scas_1\scas_2\right)\cdot\upack (\tilde{\mathrm{H}} \cdot \Slot_n(\bar{\pz}(t))) \!+\! \Delta_u(t), \label{eq:ctrl_pert} \\
    \bar{\pz}(0) \!&= \!\pack( z^\ini) \!+\!\Delta_\ini,  \nonumber
\end{align}
where $\bar{\pz}_{r_i}(t):=\Slot_1(X^{-r_i\frac{N}{n}}\cdot \bar{\pz}(t))\in \Z_q$ and the perturbations are bounded as
    \begin{equation} \begin{split} 
        \hspace{-2pt}||\Slot_n(\Delta_z(t))|| &\le \left(\textstyle\sum_{i=0}^{\kappa -1}||\tilde{\mathrm{F}}_i||\cdot n\log_2n + \kappa + 1  \right)\!\cdot\!\sigma_\mult \\&~~+ np||\tilde{\mathrm{G}}|| \cdot \sigma, \\
        ||\Delta_u(t)|| &\le \left(\scaL\scas_1\scas_2\right)\cdot\left( 1 + ||\tilde{\mathrm{H}}|| \cdot  nm\log_2\tau \right)\! \cdot\! \sigma_\mult, \\
        ||\Delta_\ini|| &\le \sigma.
    \end{split}
\end{equation}
\end{proposition}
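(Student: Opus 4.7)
The plan is to decrypt the state update \eqref{eq:encFxGu} and the output \eqref{eq:encHx}, match each resulting term against \eqref{eq:ctrl_packed}, and accumulate the errors introduced by (H1)--(H5) and Lemma~\ref{lem:tr}. The initial perturbation is immediate: $\bar{\pz}(0) = \dec(\enc(\pack(z^\ini)))$, so (H1) gives $\|\Delta_\ini\|\le\sigma$.

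For the state equation, I would distribute $\dec$ across the sum via (H2). The unencrypted multiplier $X^{-N/n}\cdot\ciphz(t)$ passes through (H3) exactly, producing $X^{-N/n}\cdot\bar{\pz}(t)$ without error. For each $\ciphf_i\boxdot\Tr_n^1(X^{-r_iN/n}\cdot\ciphz(t))$, property (H4) yields $\tilde{\mathrm{F}}_i\cdot\xi_i+\delta_i$ with $\|\delta_i\|\le\sigma_\mult$, where $\xi_i$ denotes the decrypted Trace; Lemma~\ref{lem:tr} with $\alpha=1,\beta=n$ combined with (H3) then gives $\Slot_n(\xi_i) = \bar{\pz}_{r_i}(t) + \Delta_i$ with $\|\Delta_i\|<\sigma_\mult\log_2 n$. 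The input term $\ciphg\boxdot\ciphy(t)$ decrypts through (H4) and (H1) to $\tilde{\pg}\cdot\tilde{\py}(t)+\tilde{\pg}\cdot\epsilon+\delta_g$ with $\|\epsilon\|\le\sigma$ and $\|\delta_g\|\le\sigma_\mult$. Taking $\Slot_n$ on both sides is the key step: since $\tilde{\mathrm{F}}_i=\pack(\bar{F}_i')$ has nonzero coefficients only at the $N/n$-equidistant positions, one obtains $\Slot_n(\tilde{\mathrm{F}}_i\cdot\xi_i) = \tilde{\mathrm{F}}_i\cdot\Slot_n(\xi_i) = \tilde{\mathrm{F}}_i\cdot\bar{\pz}_{r_i}(t) + \tilde{\mathrm{F}}_i\cdot\Delta_i$. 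Matching against \eqref{eq:ctrl_packed} leaves $\Slot_n(\Delta_z(t))=\sum_i \tilde{\mathrm{F}}_i\cdot\Delta_i + \Slot_n(\tilde{\pg}\cdot\epsilon) + \sum_i\Slot_n(\delta_i) + \Slot_n(\delta_g)$. The stated bound then follows from two sparsity counts: $\tilde{\mathrm{F}}_i$ has at most $n$ nonzero coefficients, so $\|\tilde{\mathrm{F}}_i\cdot\Delta_i\|\le n\|\tilde{\mathrm{F}}_i\|\sigma_\mult\log_2 n$; and $\tilde{\pg}$ in \eqref{eq:pg} has at most $np$ nonzero coefficients, so $\|\tilde{\pg}\cdot\epsilon\|\le np\|\tilde{\pg}\|\sigma$.

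For the output, I would decrypt $\ciphu(t)=\ciphh\boxdot\Tr_{n\tau}^n(\ciphz(t))$ using (H4), incurring an additive error of norm at most $\sigma_\mult$, and then apply Lemma~\ref{lem:tr} with $\alpha=n,\beta=n\tau$ to obtain $\Slot_{n\tau}(\dec(\Tr_{n\tau}^n(\ciphz(t))))=\Slot_n(\bar{\pz}(t))+\Delta'$ with $\|\Delta'\|<\sigma_\mult\log_2\tau$. Because $\tilde{\mathrm{H}}$ in \eqref{eq:pack_h} has nonzero coefficients only at the $N/(n\tau)$-equidistant positions, identity \eqref{eq:slot_mult} yields $\Slot_{n\tau}(\tilde{\mathrm{H}}\cdot\pa)=\tilde{\mathrm{H}}\cdot\Slot_{n\tau}(\pa)$ for any $\pa\in R_q$; and since $\upack$ reads only a subset of the $N/(n\tau)$-equidistant coefficients, the unpacked output depends only on $\tilde{\mathrm{H}}\cdot\Slot_n(\bar{\pz}(t))+\tilde{\mathrm{H}}\cdot\Delta'$ together with the (H4) residual. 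Using sparsity of $\tilde{\mathrm{H}}$ (at most $nm$ nonzero coefficients) gives $\|\tilde{\mathrm{H}}\cdot\Delta'\|\le nm\|\tilde{\mathrm{H}}\|\sigma_\mult\log_2\tau$, and scaling by $\scaL\scas_1\scas_2$ yields the claimed bound on $\|\Delta_u(t)\|$.

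The main obstacle is the careful bookkeeping of slot structure: the proposition bounds only $\Slot_n(\Delta_z(t))$ rather than $\Delta_z(t)$ itself, reflecting that the non-slot coefficients are irrelevant for both the recursive dynamics and $\upack$. Verifying that the errors introduced by the external products and Traces propagate correctly through the sparsity patterns of $\tilde{\mathrm{F}}_i$, $\tilde{\pg}$, and $\tilde{\mathrm{H}}$ is what yields the sharp sparsity-dependent coefficients $n\log_2 n$, $np$, and $nm\log_2\tau$ rather than the naive $N$-factor bounds one would obtain without exploiting the packing structure.
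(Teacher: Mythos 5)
Your argument is correct and follows exactly the route the paper indicates for its (omitted) proof: decrypt \eqref{eq:ctrl_enc} term by term using (H1)--(H5), invoke Lemma~\ref{lem:tr} with $(\alpha,\beta)=(1,n)$ and $(n,n\tau)$, and push the errors through the identity \eqref{eq:slot_mult} using the slot-supported sparsity of $\tilde{\mathrm{F}}_i$, $\tilde{\pg}$, and $\tilde{\mathrm{H}}$, which reproduces the stated coefficients $n\log_2 n$, $np$, and $nm\log_2\tau$. In particular, your bookkeeping of why only $\Slot_n(\Delta_z(t))$ is bounded matches the paper's own remark, so the proposal is a faithful reconstruction of the intended proof.
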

\begin{proof}
    The proof is omitted, as it can be analogously shown with Lemma~\ref{lem:tr} and property \eqref{eq:slot_mult} along the proof of \cite[Lemma 3]{Jang2024RLWE}.
\end{proof}

Though the perturbation $\Delta_z(t)$ in \eqref{eq:ctrl_pert} outside the packing slots can be unbounded, it never affects the input-output relationship and is therefore out of our interest.
More importantly, although this theorem only shows the boundedness of perturbations, their effects on stable closed-loop system can be arbitrarily suppressed by adjusting the scaling factor $\scaL$. 
In particular, in \eqref{eq:quantize_param}, the factor $1/\scaL$ was chosen just large enough to convert the matrices and signals into integers. 
However, by increasing $1/\scaL$ further, the impact of perturbations can be made arbitrarily small.
For the selection of the factor $\scaL$ and the modulus $q$, we refer to \cite[Theorem 2]{Jang2024RLWE}.

\subsection{Discussion}

The improved efficiency of the proposed encrypted controller \eqref{eq:ctrl_enc} over the prior work \cite{Jang2024RLWE} is discussed in terms of computational load and memory consumption.
First, we examine the computational load of the encrypted controller by the number of external products executed at each time step, since the external product is one of the most computationally demanding operations among the homomorphic algorithms in Table~\ref{tab:RLWE} \cite{chillotti2016rGSW}.
In addition, each automorphism requires one external product, so Algorithm~\ref{alg:Trace} demands $\log_2(\beta/\alpha)$-external products.
Therefore, there are a total of $(2 + \kappa(1+\log_2n)+\lceil\log_2m\rceil)$-external products in the proposed controller \eqref{eq:ctrl_enc},
$1+\kappa(1+\log_2n)$ for the state equation \eqref{eq:encFxGu} and $1+\lceil \log_2m\rceil$ for the output equation \eqref{eq:encHx}, respectively.

Next, the memory consumption is quantified by the total number of encrypted control parameters, as the ones in \eqref{eq:encparam}, and the automorphism keys that the controller must store for its computation.
Recall that the parameters encrypted by $\Enc^\prime(\cdot)$ and the automorphism keys all belong to $R_{qP}^{2\times 2d}$.
The proposed controller \eqref{eq:ctrl_enc} possesses $(\log_2n+\lceil \log_2m\rceil)$-automorphism keys and $(\kappa+2)$-encrypted parameters of \eqref{eq:encparam}.

On the other hand, the prior work \cite{Jang2024RLWE} packs each and every column of the control parameters, which leads to computation and memory consumption proportional to both the degree of the controller
$n$ and the input dimension
$p$.
Table~\ref{tab:comparison} summarizes this comparison and highlights the effectiveness of the proposed design.

\begin{remark}\label{rem:best}
    The proposed design has a great advantage if the controller \eqref{eq:ctrl} has a single output from which it is observable, so that it can be transformed into the observable canonical form, that is, $\kappa = m = 1$.
    In this case, both the computational load and the memory consumption of the proposed design in Table~\ref{tab:comparison} reduce to $3+\log_2n$, whereas those of \cite{Jang2024RLWE} are $O(n)$.
\end{remark}

\begin{table}
    \vspace{5pt}
    \caption{Efficiency of encrypted controllers in terms of computation and memory consumption}
    \label{tab:comparison}
    \centering
    \renewcommand{\arraystretch}{1.2}
    \begin{tabular}{c||c|c}
         \hline
         & Computational load & Memory consumption\\
         \hline
        \cite{Jang2024RLWE} & $4n + p - 2$ & $2n+p+\log_2 n$\\
        \hline
        \multirow{2}{*}{Proposed} & $2 + \kappa(1+\log_2n)$ & $\kappa+2+\log_2n$\\
         & $+\lceil\log_2m\rceil$& $+\lceil\log_2m\rceil$\\
        \hline
    \end{tabular}
\end{table}

\section{Simulation Results} \label{sec:simul}

The efficiency of the proposed encrypted controller 
\eqref{eq:ctrl_enc} is demonstrated through simulation results.
A discrete-time plant
stabilized by the controller \eqref{eq:ctrl} is written as
\begin{equation}\label{eq:plant}
    \begin{split}
        x_\mathsf{p}(t+1) &= Ax_\mathsf{p}(t) + Bu(t),\\
        y(t)&= Cx_\mathsf{p}(t), ~~ x_\mathsf{p}(0) = x_\mathsf{p}^\ini,
    \end{split}
\end{equation}
where $x_\mathsf{p}(t)\in \R^{n_\mathsf{p}}$ is the state, $u(t)\in\R^m$ is the input, and $y(t)\in\R^p$ is the output\footnote{{The plant output $y(t)$ is quantized as
$y(t)\mapsto\lceil y(t)\cdot  10^5\rfloor/10^5 $,
to be a rational number.}}.
We consider the following two cases where $(n,\kappa,m,p)$ of the controller \eqref{eq:ctrl} differs:
\begin{enumerate}[wide, labelwidth=0pt, labelindent=0pt]
    \item $(n,\kappa,m,p)=(8,1,1,1)$:
    The plant \eqref{eq:plant} is obtained by discretizing the following linearized inverted pendulum model \cite{franklin2002feedback} under the sampling period $\SI{50}{\milli\s}$:
    \begin{equation*}
    \begin{aligned}
        \left(\mathsf{I}+\mathsf{m}l^2\right)\ddot{\phi}(\mathsf{t})-\mathsf{m}\mathsf{g}l\phi(\mathsf{t})&=\mathsf{m}l\ddot{\mathsf{x}}(\mathsf{t}),\\
\left(\mathsf{M}+\mathsf{m}\right)\ddot{\mathsf{x}}(\mathsf{t})+b\dot{\mathsf{x}}(\mathsf{t})-\mathsf{m}l\ddot{\phi}(\mathsf{t})&=u(\mathsf{t}),\,\,
        y(\mathsf{t})=\mathsf{x}(\mathsf{t}),
    \end{aligned}
    \end{equation*}
    where
    $\mathsf
    x(\mathsf{t})\in\R$ is the cart position, $\phi(\mathsf{t})\in\R$ is the angle of the pendulum, with $\mathsf{M}=0.5$, $\mathsf{m}=0.2$, $b=0.1$, $l=0.2$, $\mathsf{I}=0.006$, and $\mathsf{g}=9.8$.
    The controller \eqref{eq:ctrl} is designed using the method of \cite{Lee2025Integer},
    where the state matrix $F$ is a companion matrix with $\det(sI-F)=s^4\left(s^4-s^3-13s^2-4s+10\right)$, $H=[10,\,0,\,0,\,0,\,0,\,0,\,0,\,0]$, and
    \begin{align*}
        \resizebox{\columnwidth}{!}{$
        G=\left[-640.5,\,1715,\,-1489,\,389.5,\,27.23,\,-0.6047,\,-2.364,\,0.4784\right]^\top.
        $}
    \end{align*}
    The initial condition of the inverted pendulum is $\mathsf{x}(0)=\dot{\mathsf{x}}(0)=0$ and $\phi(0)=\dot{\phi}(0)=0.1$.
    The initial state of the controller \eqref{eq:ctrl} is set as $x^\ini=\mathbf{0}$.
    
    \item $(n,\kappa,m,p)=(4,2,2,2)$:
    The matrices of the plant \eqref{eq:plant} are given as
    \begin{align*}
          A &= {\footnotesize\begin{bmatrix}
        -4.9535 & -1.3701 & 2.0157 & 1.0929 \\
        5.4838 & 3.0300 & -3.8440 & -1.9888 \\
        0.9319 & 0.5722 & -1.2467 & 0.5866 \\
        -2.4378 & -0.9447 & -2.0371 & -0.6299 \\
    \end{bmatrix}},\\
        B &= {\footnotesize\begin{bmatrix}
        1.3993 & 2.0586 & 0.0968 & 0.1186 \\
        -0.0344 & -0.0405 & 0.4669 & 0.6871 \\
    \end{bmatrix}^\top}, \\
        C &= {\footnotesize\begin{bmatrix}
        -0.5224 & -0.2219 & -0.3423 & -0.1006 \\
        -0.9765 & -0.5500 & 0.8802 & 0.4234 \\
    \end{bmatrix}}.
    \end{align*}
    The controller \eqref{eq:ctrl} has the state matrix as \eqref{eq:F_example}, and the other matrices are given as
    \begin{align*}
        G &= {\footnotesize\begin{bmatrix}
        2.7 & -1.3 & -0.1 & 5.0 \\
        3.2 & -4.9 & -1.0 & -0.3 \\
    \end{bmatrix}}^\top,~
        H = {\footnotesize\begin{bmatrix}
            1 & 0 & 0 & 0 \\
            0 & 0 & 3 & 0
        \end{bmatrix}}.
    \end{align*}
    The initial values of the plant \eqref{eq:plant} and the controller \eqref{eq:ctrl} are set as $x_\mathsf{p}^\ini = [0,\,0,\,0.1,\,-0.1]^\top$ and $x^\ini=\mathbf{0}$, respectively.
\end{enumerate}

\begin{table}[t]
    \centering
    \caption{Elapsed time of encrypted controllers at each time step }\label{tab:simul_time}
    \renewcommand{\arraystretch}{1.1}
    \begin{tabular}{c|c||ccc}
        \hline
        Case & Method & Max ($\mathrm{ms}$) & Mean ($\mathrm{ms}$) & Std ($\mathrm{ms}$) \\
        \hline
        \multirow{2}{*}{1)} & \cite{Jang2024RLWE} & 31.19 & 21.81 & 3.12\\
        & Proposed & 6.00& 4.67 & 0.48\\
        \hline
        \multirow{2}{*}{2)} & \cite{Jang2024RLWE} & 21.00 & 11.52 &1.12 \\
        & Proposed & 8.00& 6.46 & 0.52\\
        \hline
    \end{tabular}
\end{table}

For both cases, the scale factors are chosen as $(\scaL, \scas_1,\scas_2)=(10^{-10},10^{-4},1)$.
The proposed encrypted controller \eqref{eq:ctrl_enc} and that of \cite{Jang2024RLWE} are 
implemented using Lattigo version $6.1.0$ \cite{lattigo}, an HE library written in Go.
The encryption parameters are set as $q\approx 2^{56}$, $P\approx 2^{51}$, and $(N, \nu, \sigma) = (2^{13},q,19.2)$, ensuring $128$-bit security \cite{HEstandard}.
All experiments are performed on a desktop with Intel Core i7-12700K CPU at 3.61 GHz,
and compiled with Go 1.23.5.
The entire code for simulation is available at GitHub\footnote{\url{https://github.com/CDSL-EncryptedControl/CDSL}}.

\begin{figure}
    \vspace{-6pt}
    \input{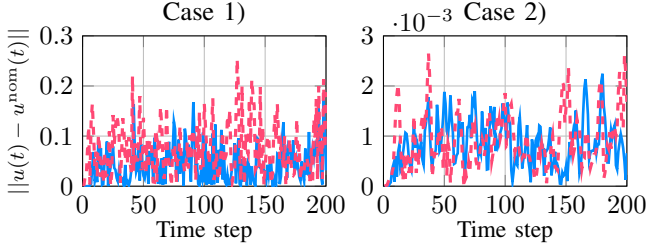}
    \centering
    \caption{Performance error of the encrypted controller of \cite{Jang2024RLWE} (red dashed-line) and the proposed encrypted controller (blue solid line).} 
    \label{fig:performance_error}
\end{figure}

Table~\ref{tab:simul_time} shows the elapsed time for each control period---from packing of the plant output to unpacking of the control input---of encrypted controllers.
It is evident from Table~\ref{tab:simul_time} that
the proposed design remarkably alleviates the computational burden of encrypted controllers, compared to that of \cite{Jang2024RLWE}.
In particular, the elapsed time of \cite{Jang2024RLWE} strictly increases as the degree of the controller increases from Case 2) to Case 1), while that of the proposed design decreases since Case 1) has a smaller $\kappa$.

Fig.~\ref{fig:performance_error} depicts the performance error $\|u(t)-u^\mathrm{nom}(t) \|$ of the encrypted controllers,
where $u(t)$ is the control input generated by the encrypted controller and $u^\mathrm{nom}(t)$ is the one from the original (unencrypted) controller.
It demonstrates that the performance error remains bounded for both cases.

\appendix

\addtolength{\topmargin}{5pt}
\begin{algorithm}[t]
\caption{Finding a similarity transformation for the rational canonical form}
\begin{algorithmic}[1]
\renewcommand{\algorithmicrequire}{\textbf{Procedure}}
\renewcommand{\algorithmicensure}{\textbf{Input}}
\Ensure $F\in\Q^{n\times n}$
\Require 
\State Let $l$, $p_i(s)$ and $\eta_i$ for $i=1,\,\ldots,\,l$ be defined by \eqref{eq:minpoly}.
\For{($i = 1; i = l; i=i+1$)}
\State $d_i \gets \deg p_i(s)$, $k \gets 0$, $\mathcal{B}_i \gets \emptyset$
    \For{($j = \eta_i; j = 1; j = j-1$)}
         \State $r_j \gets \left(\nult(p_i(F)^j) - \nult(p_i(F)^{j-1})\right)/d_i$
        \While{($k < r_j$)}
        \State Find $w \in \ker p_i(F)^j \setminus\ker p_i(F)^{j-1}$ such that $ \mathcal{B}_i \cup \{p_i(F)^{j-1}w \}$ is linearly independent.
        \State $\mathcal{B}_i \gets \mathcal{B}_i\cup \{w, Fw, \ldots, F^{jd_i-1}w\}$, $k \gets k+1$
        \State $v_{i,k} \gets w$, $\delta_{i,k} \gets j$
        \EndWhile
    \EndFor
    \State $\kappa_i \gets k$
\EndFor
\State $\kappa \gets \max_i \kappa_i$
\For{($j=1, j=\kappa, j = j+1$)}
    \State $\mathcal{I}_j\gets \lbrace 1\leq i\leq l: v_{i,j}\,\,\text{exists}\rbrace$
    \State
    $v_j \gets \sum_{i\in\mathcal{I}_j}v_{i,j}$, $\delta_j \gets \sum_{i\in\mathcal{I}_j}d_i\delta_{i,j}$
    \vspace{2pt}
    \State $V_j \gets \begin{bmatrix}
        F^{\delta_j-1}v_j & \cdots & Fv_j & v_j
    \end{bmatrix}$
\EndFor
\State $V \gets \begin{bmatrix}
    V_\kappa & \cdots & V_2 & V_1
\end{bmatrix}$
\State \Return $T:= V^{-1}$
\end{algorithmic}
\label{alg:RCF}
\end{algorithm}

We present a constructive method to transform a square matrix into its rational canonical form. Given $F \in \Q^{n\times n}$, Algorithm~\ref{alg:RCF} computes a similarity transformation $T \in \Q^{n\times n}$ so that $TFT^{-1}$ is the rational canonical form of $F$.

To understand Algorithm~\ref{alg:RCF}, we review some linear algebra \cite{hoffman1971algebra}. 
For a nonzero $v \in \Q^{n}$, there exists a unique $g\in \N$ such that the set $\{v,\, Fv,\, \ldots,\, F^{g-1}v\}$ is linearly independent but the set $\{v,\, Fv,\, \ldots,\, F^{g}v\}$ is linearly dependent. 
Then, it can be observed that $V_v^\dagger FV_v$, where $V_v:=[F^{g-1}v,\,\ldots,\,Fv,\,v]$ and $V_v^\dagger$ is its left inverse, is a companion matrix. 
The set $\{v,\, Fv,\, \ldots,\, F^{g-1}v\}$ is called \emph{the $F$-cyclic basis generated by} $v$. 
Thus, we construct the set of the columns of $V = T^{-1}$, as a union of $F$-cyclic bases.

Let the minimal polynomial $\mu_F(s)$ of $F$ be written by
\begin{equation}\label{eq:minpoly}
    \mu_F(s) = \textstyle\prod_{i=1}^l p_i(s)^{\eta_i},
\end{equation}
with some $\{ \eta_i \}_{i=1}^{ l }$, where the polynomial $p_i(s)$ is monic and irreducible over $\Q$, for $i=1,\,\ldots,\,l$.
The primary decomposition theorem \cite[Theorem 6.12]{hoffman1971algebra} ensures that $\Q^n = \oplus_{i=1}^{l}\ker p_i(F)^{\eta_i}$, where $\oplus$ denotes the direct sum. % of subspaces.
Thus, in the lines 2--13 of Algorithm~\ref{alg:RCF}, we first find a basis of each $\ker p_i(F)^{\eta_i}$ as a union of $F$-cyclic bases.
Subsequently, in the lines 14--19, the generators of these $F$-cyclic bases are added to each other to build the desired generators.

\begin{proposition}\label{prop:kappa}
    Let $T$ and $\kappa$ be determined from Algorithm~\ref{alg:RCF}. Then, $TFT^{-1}$ is the rational canonical form of $F$. 
    Furthermore, $\kappa =  \max_{\lambda \in \mathbb{C}} \mathrm{nullity}(F-\lambda I)$.
\end{proposition}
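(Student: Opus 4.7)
The plan is to verify two claims: (i) that the columns of $V$ produced by Algorithm~\ref{alg:RCF} form a basis of $\Q^n$ adapted so that $V^{-1}FV$ is block-diagonal with companion blocks whose characteristic polynomials form a divisibility chain, and (ii) that the count $\kappa$ of these blocks coincides with $\max_{\lambda\in\mathbb{C}}\mathrm{nullity}(F-\lambda I)$.

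For (i) I would first recall the primary decomposition $\Q^n=\bigoplus_{i=1}^{l}\ker p_i(F)^{\eta_i}$ induced by the minimal polynomial factorization \eqref{eq:minpoly}. Inside each primary component $\ker p_i(F)^{\eta_i}$, the inner loop of the algorithm constructs an $F$-cyclic decomposition: for each $j=\eta_i,\eta_i-1,\ldots,1$ the generator $v_{i,k}$ is chosen in $\ker p_i(F)^{j}\setminus\ker p_i(F)^{j-1}$ so that $\mathcal{B}_i\cup\{p_i(F)^{j-1}v_{i,k}\}$ remains linearly independent; by the standard cyclic-decomposition argument this yields a basis of $\ker p_i(F)^{\eta_i}$ whose counts $r_j$ are indeed $(\mathrm{nullity}\,p_i(F)^j-\mathrm{nullity}\,p_i(F)^{j-1})/d_i$. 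Hence the minimal polynomial of $v_{i,k}$ is exactly $p_i(s)^{\delta_{i,k}}$, and $\{v_{i,k},Fv_{i,k},\ldots,F^{d_i\delta_{i,k}-1}v_{i,k}\}$ is $F$-cyclic.

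The central step, and the part I expect to be the main obstacle, is showing that after merging across primes via $v_j=\sum_{i\in\mathcal{I}_j}v_{i,j}$ the vector $v_j$ still generates an $F$-cyclic subspace of dimension $\delta_j=\sum_{i\in\mathcal{I}_j}d_i\delta_{i,j}$. Because the $p_i(s)^{\delta_{i,j}}$ are pairwise coprime and the primary components are $F$-invariant and linearly independent, the annihilator of $v_j$ equals $\prod_{i\in\mathcal{I}_j}p_i(s)^{\delta_{i,j}}$: by a Chinese remainder argument any polynomial annihilating $v_j$ must annihilate each summand $v_{i,j}$ separately. This gives $\dim\,\mathrm{span}\{F^sv_j\}=\delta_j$ and shows that $V_j^{-1}FV_j$ is the companion matrix of $\prod_{i\in\mathcal{I}_j}p_i(s)^{\delta_{i,j}}$. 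The ordering $j=\kappa,\kappa-1,\ldots,1$ together with $\delta_{i,1}\le\delta_{i,2}\le\cdots\le\delta_{i,\kappa_i}\le\eta_i$ (enforced by the outer loop direction) yields the required divisibility $\det(sI-C_{j-1})\mid\det(sI-C_{j})$. Counting dimensions, $\sum_j\delta_j=\sum_{i,k}d_i\delta_{i,k}=n$, so the columns of $V$ indeed form a basis of $\Q^n$, confirming that $T=V^{-1}$ puts $F$ into rational canonical form.

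For (ii), I would use the invariant-factor description obtained in part (i): the characteristic polynomials $f_j(s):=\det(sI-C_j)$ satisfy $f_1\mid f_2\mid\cdots\mid f_\kappa$, and by construction $f_\kappa=\mu_F$. For any $\lambda\in\mathbb{C}$, the geometric multiplicity $\mathrm{nullity}(F-\lambda I)$ equals the number of indices $j$ with $f_j(\lambda)=0$, because on each cyclic block $C_j$ the kernel of $C_j-\lambda I$ is either trivial or one-dimensional according to whether $\lambda$ is a root of $f_j$. Taking $\lambda$ to be any complex root of $f_1$ (which exists since $\deg f_1\ge 1$) and using $f_1\mid f_j$ for all $j$, every $f_j$ vanishes at this $\lambda$, giving $\mathrm{nullity}(F-\lambda I)=\kappa$. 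Conversely no $\lambda$ can exceed $\kappa$ since there are only $\kappa$ blocks, so $\max_{\lambda\in\mathbb{C}}\mathrm{nullity}(F-\lambda I)=\kappa$, as claimed.
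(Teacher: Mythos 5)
Your overall route is the same as the paper's (primary decomposition, cyclic bases inside each $\ker p_i(F)^{\eta_i}$, merging generators across distinct irreducible factors by summation and coprimality, then counting blocks via geometric multiplicities), and your merging step and your argument for $\kappa=\max_{\lambda\in\mathbb{C}}\mathrm{nullity}(F-\lambda I)$ are fine — the latter is a clean equivalent of the paper's observation that $\kappa=\max_i \mathrm{nullity}(p_i(F))/d_i$ and that distinct irreducible factors share no complex roots. However, you misplace the main difficulty. The part you dismiss as ``the standard cyclic-decomposition argument'' is exactly what Proposition~\ref{prop:kappa} has to establish: the proposition asserts correctness of the specific greedy procedure in lines 2--13 of Algorithm~\ref{alg:RCF}, not merely the existence of some cyclic decomposition. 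Concretely, two things need proof there: (a) whenever $k<r_j$ there exists $w\in\ker p_i(F)^j\setminus\ker p_i(F)^{j-1}$ with $\mathcal{B}_i\cup\{p_i(F)^{j-1}w\}$ linearly independent, so the algorithm never stalls; and (b) the far weaker test used in line 7 (independence of $\mathcal{B}_i\cup\{p_i(F)^{j-1}w\}$ alone) already guarantees that the entire orbit $\{w,Fw,\ldots,F^{jd_i-1}w\}$ can be adjoined to $\mathcal{B}_i$ without losing independence. Point (b) does not follow from quoting the existence theorem; the paper proves it by expanding a hypothetical dependence relation via the division algorithm in powers of $p_i(F)$, multiplying by $p_i(F)^{j},p_i(F)^{j-1},\ldots$ and using irreducibility of $p_i$ together with the B\'ezout identity. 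Your proposal contains no substitute for this induction, so part (i) of your plan has a genuine gap precisely where the paper's proof does its real work (the merging step you flag as the main obstacle is, by comparison, routine).

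There is also a bookkeeping error you should fix: since the outer loop runs $j=\eta_i,\eta_i-1,\ldots,1$ while $k$ increases, the exponents satisfy $\delta_{i,1}=\eta_i\ge\delta_{i,2}\ge\cdots\ge\delta_{i,\kappa_i}$, not the increasing chain you state; consequently $\mathcal{I}_j\subseteq\mathcal{I}_{j-1}$ and $\zeta_j(s)\mid\zeta_{j-1}(s)$, so the generator found first carries the minimal polynomial ($\zeta_1=\mu_F$), not the last one as in your ``$f_\kappa=\mu_F$''. The divisibility chain required by Proposition~\ref{prop:RCF} then comes out correctly only because $V$ orders the blocks as $[V_\kappa\;\cdots\;V_1]$. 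This is repairable by relabeling (and your maximizing $\lambda$ should be a root of the smallest invariant factor $\zeta_\kappa$), but as written the monotonicity claim and the stated divisibility direction contradict the algorithm; also note $V_j$ is $n\times\delta_j$, so you need its left inverse $V_j^\dagger$ rather than $V_j^{-1}$.
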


\begin{proof}[Sketch of proof] 
    Consider the lines 2--13. We claim that the resulting $\mathcal{B}_i$ after the line 13 is a basis for each $\ker p_i(F)^{\eta_i}$, for $i = 1, 2, \ldots, l$. Fix $i$ and consider a chain of spaces:
    \begin{equation} \label{eq:chainKernel}
        \{ 0  \} \subseteq \ker p_i(F)  {\subseteq} \cdots  {\subseteq}  \ker p_i(F)^{\eta_i - 1} {\subseteq} \ker p_i(F)^{\eta_i}.
    \end{equation}
    Our strategy is to find generators in $\ker p_i(F)^j \setminus \ker p_{i}(F)^{j-1}$ in the descending order of $j = \eta_i, \eta_i-1, \ldots, 1$.
    Since $d_i | \nult(p_i(F)^j)$ \cite[Theorem 7.24]{FIS2002Linear}, $r_j$ as defined in the line 5 is a positive integer for $j = \eta_i, \eta_i-1, \ldots, 1$. 
    We also note that $r_{j+1} \le r_{j}$ for all $j$. 
    Then, the while-loop in the lines 6--10 finds ($r_{j} - r_{j+1}$)-generators, where the variable $k$ denotes the number of generators that are already selected.

    We inductively show that $\mathcal{B}_i$ is linearly independent. At $k = 0$ and $j = \eta_i$, the existence of $w \in \ker p_i(F)^j \setminus\ker p_i(F)^{j-1}$ is clear, because of the minimality of $\mu_F(s)$.
    In addition, it can be easily shown that the $F$-cyclic basis generated by $w$ is $\{ w, Fw, \ldots, F^{jd_i-1}w\}$. 
    Now, suppose that $\mathcal{B}_i$ consists of $F$-cyclic bases generated by $v_{i,\iota} \in \ker p_i(F)^{\delta_{i,\iota}}\setminus \ker p_i(F)^{\delta_{i,\iota} -1}$ for $\iota = 1, \ldots, k$, and $\mathcal{B}_i =\cup_{\iota=1}^k\{v_{i,\iota}, Fv_{i,\iota}, \ldots, F^{d_i\delta_{i,\iota}-1}v_{i,\iota}\}$ is linearly independent. 
    The dimension condition $k < r_j$ ensures the existence of $w \in \ker p_i(F)^j \setminus\ker p_i(F)^{j-1}$ such that $\mathcal{B}_i \cup \{p_i(F)^{j-1}w \}$ is linearly independent, as in the line 7. 
    Then we show that $\mathcal{B}_i\cup \{w, Fw, \ldots, F^{jd_i-1}w\}$ is linearly independent. 
    Suppose that
    $
    \textstyle\sum_{\iota=1}^k\gamma_\iota(F)v_{i, \iota} + \gamma_{k+1}(F)w = \mathbf{0}
    $
    for some $\gamma_\iota(s)\in \Q[s]$ of degree less than $d_i\delta_{i,\iota}$, for $\iota = 1, \ldots, k$, and $\gamma_{k+1}(s) \in \Q[s]$ of degree less than $d_ij$. 
    Indeed, we claim that $\gamma_\iota(s) = 0$ for all $\iota$. The division algorithm provides that
    \begin{multline} \label{eq:noname}
        \textstyle\sum_{\iota=1}^k\sum_{h=0}^{\delta_{i,\iota}-1}p_i(F)^{h} q_{h,\iota}(F)v_{i,\iota}\\+ \textstyle\sum_{h=0}^{j-1}p_i(F)^hq_{h,k+1}(F)w = \mathbf{0},
    \end{multline}
    where $q_{h,\iota}(s) = 0$ or $\deg(q_{h,\iota}(s)) < d_i$ for all $h$ and $\iota$. By multiplying $p_i(F)^{j}$ on the both sides of \eqref{eq:noname}, it gives $q_{h,\iota}(s) = 0$ for $h = 0,\ldots, \delta_{i, \iota}-j-1$ and for $\iota = 1, \ldots, k$, thanks to the linear independence of $\mathcal{B}_i$. Also, one can show that multiplying $p_i(F)^{\lambda}$ on the both sides of \eqref{eq:noname} inductively for $\lambda = j-1, \ldots, 1$ gives $q_{h,\iota}(s) = 0$ for all $h$ and $\iota$, with the irreducibility of $p_i(s)$ and Bézout identity. 
    Therefore, we obtain $\gamma_\iota(s) = 0$ for all $\iota$.
    Finally, since $\nult (p_i(F)^{\eta_i}) = d_i \sum_{j=1}^{\eta_i} r_j = d_i\sum_{\iota=1}^{\kappa_i}\delta_{i, \iota}$,
    $\mathcal{B}_i$ is a basis for $\ker p_i(F)^{\eta_i}$. 

    Next, we consider the lines 14--19. 
    Let $\zeta_j(s)$ be the monic polynomial of minimal degree such that $\zeta_j(F)v_j = \mathbf{0}$.
    Then, it can be shown that $\zeta_j(s) = \prod_{i\in \mathcal{I}_j} p_{i}(s)^{\delta_{i,j}}$, that is, the $F$-cyclic basis generated by $v_j$ is exactly $\beta_j = \{v_j, \,Fv_j, \, \ldots, \, F^{\delta_j-1}v_j\}$. Also, it is clear that $\cup_{j=1}^{\kappa}\beta_j$ is linearly independent, thus a basis for $\Q^n$, as $\sum_{j=1}^{\kappa}\delta_j = n$.
    Moreover, note that $\zeta_j(s)$ is indeed the characteristic polynomial of the companion matrix $C_j = V_j^\dagger F V_j$ and $\Span(\beta_j)$ is $F$-invariant.
    Since $\mathcal{I}_j \subset \mathcal{I}_{j-1}$ for $j = 2, 3, \ldots, \kappa$, we have $\zeta_j(s)|\zeta_{j-1}(s)$ for $j = 2, 3, \ldots, \kappa$, which implies that $TFT^{-1}$ is the rational canonical form.
    
    Finally, recall that $\kappa = \max_{i} (\nult p_i(F))/d_i$. Furthermore, $p_i(s)$ and $p_j(s)$ have no common complex roots whenever $i \ne j$ \cite[Theorem 13.9]{dummit2004abstract}. 
    Therefore, we obtain $\kappa =  \max_{\lambda \in \mathbb{C}} \mathrm{nullity}(F-\lambda I)$.
\end{proof}

\section*{Acknowledgment}
The authors are grateful to Prof. Yongsoo Song
of Seoul National University
for the helpful discussions.

\bibliography{F_citation}
\bibliographystyle{IEEEtran}

\end{document}